\documentclass[sigconf]{acmart}

\usepackage{amsmath,amsfonts}
\usepackage{mathrsfs}
\usepackage{bm}
\usepackage{nicefrac}
\usepackage{booktabs}
\usepackage{array}
\usepackage{multirow}
\usepackage{threeparttable}
\usepackage{makecell}
\usepackage[procnumbered,ruled,vlined,linesnumbered]{algorithm2e}
\usepackage{siunitx}
\usepackage{stfloats}
\usepackage{subfigure}
\usepackage{array}
\usepackage{balance}
\usepackage{tabularx}
\usepackage{lipsum}
\usepackage{xcolor}
\usepackage{tikz,tikz-cd}
\usepackage{pgfplots,capt-of}

\pgfplotsset{compat=1.18}

\newcommand{\spar}[1]{\left( #1 \right)}
\newcommand{\normspar}[1]{( #1 )}
\newcommand{\midpar}[1]{\left[ #1 \right]}

\newcommand{\myord}[1]{{#1}^{\rm{th}}}
\newcommand{\setof}[1]{\left\{#1 \right\}}
\newcommand{\abs}[1]{\left|#1 \right|}
\newcommand{\mean}[1]{\mathbb{E}\midpar{#1}}
\newcommand{\ceil}[1]{\left\lceil#1\right\rceil}
\newcommand{\floor}[1]{\left\lfloor#1\right\rfloor}
\newcommand{\Abs}[1]{\left\Vert#1\right\Vert}
\newcommand{\trace}[1]{\mathrm{Tr}\spar{#1}}
\newcommand{\diag}[1]{\mathrm{diag}\spar{#1}}

\newcommand{\hit}[2]{H_{{#1}{#2}}}
\newcommand{\absorb}[1]{H_{#1}}

\newcommand{\randhit}[2]{T_{{#1}{#2}}}

\newcommand{\algoname}[1]{\textsc{#1}}
\newcommand{\bernfunc}[1]{f\spar{#1}}
\newcommand{\bigo}[1]{O\spar{#1}}

\newcommand{\bigspar}[1]{\big( #1 \big)}
\newcommand{\biggspar}[1]{\bigg( #1 \bigg)}

\newcommand{\bigabs}[1]{\big| #1 \big|}
\newcommand{\biggabs}[1]{\bigg| #1 \bigg|}

\newcommand{\bigtrace}[1]{\mathrm{Tr}\bigspar{#1}}
\newcommand{\normtrace}[1]{\mathrm{Tr}\normspar{#1}}

\newcommand{\bigmean}[1]{\mathbb{E}\big[ #1 \big]}
\newcommand{\bigbigo}[1]{O\bigspar{#1}}
\newcommand{\bigbernfunc}[1]{f\bigspar{#1}}

\newcommand{\gr}{G}
\newcommand{\kem}{K}

\newcommand{\sqrtset}{\mathcal{X}}
\newcommand{\rea}{\mathbb{R}}

\newcommand{\rme}{\mathrm{e}}
\newcommand{\dmax}{d_{\max}}

\newcommand{\subfund}[1][s]{\spar{\mati-\matp_{-{#1}}}^{-1}}
\newcommand{\normfund}{\tilde{\fund}}
\newcommand{\diaglap}{\tilde{\lap}}
\newcommand{\normvecpi}{\tilde{\vecpi}}

\newcommand{\trunckem}[1][l]{\kem^{\scriptscriptstyle (#1)}}
\newcommand{\truncfund}[1][l]{\fund^{\scriptscriptstyle (#1)}}
\newcommand{\approxkem}{\hat{\kem}^{\scriptscriptstyle (l)}}

\newcommand{\sqrtkem}{\tilde{\kem}^{\scriptscriptstyle (l)}}
\newcommand{\treekem}{\hat{\kem}}
\newcommand{\retnum}[2]{\hat{t}^{\scriptscriptstyle (l)}_{\scriptscriptstyle #1,#2}}
\newcommand{\meanretnum}[1]{\bar{t}^{\scriptscriptstyle (l)}_{\scriptscriptstyle #1}}
\newcommand{\passnum}[2]{\hat{t}_{\scriptscriptstyle #1,#2}}
\newcommand{\meanpassnum}[1]{\bar{t}_{\scriptscriptstyle #1}}

\newcommand{\veca}{\boldsymbol{a}}
\newcommand{\vecb}{\boldsymbol{b}}
\newcommand{\vecc}{\boldsymbol{c}}

\newcommand{\vece}{\boldsymbol{e}}

\newcommand{\vecpi}{\boldsymbol{\pi}}

\newcommand{\vecone}{\boldsymbol{1}}
\newcommand{\veczero}{\boldsymbol{0}}

\newcommand{\normlap}{{\boldsymbol{\mathcal{L}}}}
\newcommand{\fund}{\boldsymbol{F}}

\newcommand{\lap}{\boldsymbol{L}}
\newcommand{\mata}{\boldsymbol{A}}
\newcommand{\matb}{\boldsymbol{B}}
\newcommand{\matc}{\boldsymbol{C}}
\newcommand{\matd}{\boldsymbol{D}}

\newcommand{\mati}{\boldsymbol{I}}

\newcommand{\matp}{\boldsymbol{P}}

\newcommand{\matx}{\boldsymbol{X}}

\newcommand{\matpi}{\boldsymbol{\Pi}}

\newcommand{\lemref}[1]{Lemma~\ref{#1}}
\newcommand{\thmref}[1]{Theorem~\ref{#1}}

\newcommand{\algoref}[1]{Algorithm~\ref{#1}}

\newcommand{\secref}[1]{Section~\ref{#1}}
\newcommand{\tabref}[1]{Table~\ref{#1}}
\newcommand{\figref}[1]{Figure~\ref{#1}}

\newcommand{\lineref}[1]{Line~\ref{#1}}
\newcommand{\linerangeref}[2]{Lines~\ref{#1}-\ref{#2}}

\newcommand{\nicehalf}{\nicefrac{1}{2}}

\DeclareMathOperator*{\argmax}{arg\,max}

\DontPrintSemicolon
\SetKw{KwAnd}{and}
\SetFuncSty{textsc}
\SetKwInOut{Input}{Input\ \ \ \ }

\SetKwInOut{Output}{Output}

\AtBeginDocument{%
  }

\copyrightyear{2024}
\acmYear{2024}
\setcopyright{acmlicensed}
\acmConference[KDD '24]{Proceedings of the 30th ACM SIGKDD Conference on Knowledge Discovery and Data Mining}{August 25--29, 2024}{Barcelona, Spain.}
\acmISBN{979-8-4007-0490-1/24/08}
\acmDOI{10.1145/3637528.3671859}

%%
%% Submission ID.
%% Use this when submitting an article to a sponsored event. You'll
%% receive a unique submission ID from the organizers
%% of the event, and this ID should be used as the parameter to this command.
%%\acmSubmissionID{123-A56-BU3}

\newboolean{fullver}

% Uncomment command below to get the full version
% \setboolean{fullver}{true}

\settopmatter{printacmref=true}
\begin{document}

\title{Fast Computation of Kemeny's Constant for Directed Graphs}

\author{Haisong Xia}
\affiliation{
      \institution{Fudan University}
      \city{Shanghai}
      \country{China}
}
\email{hsxia22@m.fudan.edu.cn}

\author{Zhongzhi Zhang\footnotemark}
\affiliation{
      \institution{Fudan University}
      \city{Shanghai}
      \country{China}
}
\email{zhangzz@fudan.edu.cn}

\begin{abstract}
      Kemeny's constant for random walks on a graph is defined as the mean hitting time from one node to another selected randomly according to the stationary distribution. It has found numerous applications and attracted considerable research interest. However, exact computation of Kemeny's constant requires matrix inversion, which scales poorly for large networks with millions of nodes. Existing approximation algorithms either leverage properties exclusive to undirected graphs or involve inefficient simulation, leaving room for further optimization. To address these limitations for directed graphs, we propose two novel approximation algorithms for estimating Kemeny's constant on directed graphs with theoretical error guarantees. Extensive numerical experiments on real-world networks validate the superiority of our algorithms over baseline methods in terms of efficiency and accuracy.
\end{abstract}

\begin{CCSXML}
      <ccs2012>
      <concept>
      <concept_id>10003752.10003809.10003635</concept_id>
      <concept_desc>Theory of computation~Graph algorithms analysis</concept_desc>
      <concept_significance>500</concept_significance>
      </concept>
      <concept>
      <concept_id>10003752.10010061.10010065</concept_id>
      <concept_desc>Theory of computation~Random walks and Markov chains</concept_desc>
      <concept_significance>500</concept_significance>
      </concept>
      </ccs2012>
\end{CCSXML}

\ccsdesc[500]{Theory of computation~Graph algorithms analysis}
\ccsdesc[500]{Theory of computation~Random walks and Markov chains}

\keywords{Random walk; approximation algorithm; hitting time; Kemeny's constant; spectral graph theory. }

% \received{20 February 2007}
% \received[revised]{12 March 2009}
% \received[accepted]{5 June 2009}

\maketitle

\section{Introduction}

Random walks on complex networks have emerged as a powerful analytical tool with broad applications including recommendation systems~\cite{PaBe21}, representation learning~\cite{ZhRo21}, privacy amplification~\cite{LiTaTaKaCaYo22}, and so on. For a random walk on a graph, a fundamental quantity is the hitting time \(\hit{i}{j}\), which is defined as the expected number of steps for a walker starting from node \(i\) to visit node \(j\) for the first time. As a key quantity, hitting times have been widely utilized across domains to address problems in complex networks, such as assessing transmission costs in communication networks~\cite{GaMaPrSh06}, developing clustering algorithms~\cite{ChLiTa08,Ab18}, and identifying significant nodes~\cite{WhSm03}.

Stemming from the hitting time, many important quantities for random walks can be formulated, such as Kemeny's constant. For random walks on a graph, Kemeny's constant is defined as the expected hitting time from an arbitrary source node to the target selected randomly according to the stationary distribution of the random walk. Kemeny's constant has found various applications in diverse areas. First, it is one of the widely used connectivity~\cite{BeHe19} or criticality~\cite{LeSa18} measures for a graph. Second, based on Kemeny's constant, an edge centrality~\cite{LiLiZhCaSh21,AlBiCuMePo23} was defined to identify important edges. Finally, Kemeny's constant was applied to characterize the performance of consensus protocols with noise~\cite{JaOl19}.

\renewcommand{\thefootnote}{*}
\footnotetext[1]{Corresponding author. Both authors are with Shanghai Key Laboratory of Intelligent Information Processing, School of Computer Science, Fudan University, Shanghai, 200433, China. Zhongzhi Zhang is also with Institute of Intelligent Complex Systems, Fudan University, Shanghai, 200433, China.}

%Stemming from the hitting time, many important quantities of random walks can be formulated, such as Kemeny's constant. For random walks on a graph, Kemeny's constant is defined as expected hitting time from an arbitrary source node to the target selected randomly according to the stationary distribution of the random walk. Owing to its multiple interpretations, Kemeny's constant has proven useful across diverse areas. For instance, as Kemeny's constant represents the average mixing time to stationarity in ergodic Markov chains, it has been used to quantify the vulnerability of networks~\cite{LiLiZhCaSh21}. Moreover, Kemeny's constant has been applied to measure the performance of consensus protocols with noise~\cite{JaOl19}.

Despite the utility of Kemeny's constant across various applications, directly computing it on large real-world networks remains prohibitively expensive. As discussed in \secref{subsec:hit-kem}, calculating Kemeny's constant involves matrix inversion, whose complexity is \(O(n^3)\) for an \(n\)-node graph. This cubic scaling renders exact computation infeasible for networks with millions of nodes.

In order to reduce computational time for Kemeny's constant, some approximation algorithms have been developed to estimate this graph invariant. Xu~\textit{et al.}~\cite{XuShZhKaZh20} proposed \algoname{ApproxKemeny}, which is based on Hutchinson's method and the nearly linear-time Laplacian solver~\cite{KySa16}. However, results in \secref{sec:experim} indicates that \algoname{ApproxKemeny} requires much more memory space than other methods, reducing its scalability for large networks. Very recently, Li~\textit{et al.}~\cite{LiHuLe21} provided \algoname{DynamicMC}, which is based on simulating truncated random walks. While its GPU implementation achieves state-of-the-art performance, \algoname{DynamicMC} still has many opportunities for further optimization.

On the other hand, most of existing methods for estimating Kemeny's constant  are restricted to undirected networks, including \algoname{ApproxKemeny} and \algoname{DynamicMC}. For example, the Laplacian solver~\cite{KySa16} leverages some specific properties of undirected graphs, thus \algoname{ApproxKemeny} fails to support digraphs. Although \algoname{DynamicMC} can handle digraphs,  its theoretical guarantees are not readily extended to directed graphs from the perspective in~\cite{LiHuLe21}. Nonetheless, many real-world networks are inherently directed, such as citation networks, the World Wide Web, and online social networks. The lack of an efficient approximation algorithm for estimating Kemeny's constant on directed graphs limits further applications on these important networks.

Motivated by \algoname{DynamicMC}, we provide an approximation algorithm \algoname{ImprovedMC} for estimating Kemeny's constant of digraphs with error guarantee. Apart from simulating truncated random walks, \algoname{ImprovedMC} also incorporates several optimization techniques. First, \algoname{ImprovedMC} adaptively determines the amount of simulation initialized from each node, reducing unnecessary simulation without sacrificing theoretical accuracy. Additionally, \algoname{ImprovedMC} restricts simulation to a subset of nodes in the network. By sampling from selected starting nodes, \algoname{ImprovedMC} achieves sublinear time complexity while still preserving theoretical accuracy. Extensive experiments reveal that compared with the state-of-the-art method \algoname{DynamicMC}, \algoname{ImprovedMC} attains up to \(800\times\) speedup, while achieving comparable accuracy.
To further improve the accuracy, we derive an alternative formula that connects Kemeny's constant with the inverse of a submatrix associated with the transition matrix. This motivates the development of a new Monte Carlo algorithm \algoname{TreeMC} based on directed tree sampling. We experimentally demonstrate the superiority of \algoname{TreeMC} over the state-of-the-art method in terms of both efficiency and accuracy.

The key contributions of our work are summarized as follows.

\begin{itemize}
      \item First, we develop an improved Monte Carlo algorithm \algoname{ImprovedMC} to estimate Kemeny's constant of digraphs by simulating truncated random walks. \algoname{ImprovedMC} achieves sublinear time complexity while still ensuring provable accuracy guarantees.
      \item Based on a derived alternative formula, we propose another Monte Carlo algorithm \algoname{TreeMC} that approximates Kemeny's constant of digraphs by sampling directed rooted spanning trees, which is considerably accurate.
      \item We conduct extensive experiments on real-world networks. The results indicate that both of our proposed algorithms outperform the baseline approaches by orders of magnitude speed-up, while still retaining comparable accuracy.
\end{itemize}

\section{Preliminaries}\label{sec:prelim}

\ifthenelse{\boolean{fullver}}{
      In this section, we provide an overview of key notations and essential graph concepts, including random walk, hitting time, and Kemeny's constant.
}

\subsection{Notations}

Let \(\rea\) denote the set of real numbers. We use regular lowercase letters like \(a,b,c\) for scalars within \(\rea\). Bold lowercase letters, such as \(\veca, \vecb, \vecc\), represent vectors, while bold uppercase letters, like \(\mata, \matb, \matc\), denote matrices. Specific elements are accessed by using subscripts: \(a_i\) for the \(\myord{i}\) element of \(\veca\) and \(\mata_{i,j}\) for the entry at position \((i,j)\). Subvectors and submatrices are similarly indicated with subscript numerals.
For example, \(\veca_{-i}\) denotes the subvector of \(\veca\) obtained by excluding its \(\myord{i}\) element, while \(\mata_{-i}\) represents the submatrix of \(\mata\) constructed by removing its \(\myord{i}\) row and \(\myord{i}\) column. Crucially, subscripts take precedence over superscripts in this notation. Consequently, \(\mata_{-i}^{-1}\) represents the inverse of \(\mata_{-i}\), rather than a submatrix of \(\mata^{-1}\). We use \(\vecone\) to denote a vector of specific dimensions with all elements being \(1\). \tabref{tab:notations} lists the frequently used notations throughout this paper.

\begin{table}[ht]
      \centering
      \caption{Frequently used notations.}
      \label{tab:notations}
      \begin{tabularx}{\linewidth}{rX}
            \toprule
            \textbf{Notation}                  & \textbf{Description}                                                                                       \\
            \midrule
            \(\gr=(V,E)\)                      & A digraph with node set \(V\) and edge set \(E\).                                                          \\
            \(n,m\)                            & The number of nodes and edges in \(\gr\).                                                                  \\
            \(\pi_i\)                          & The stationary distribution of node \(i\).                                                                 \\
            \(\matp\)                          & The transition matrix of random walks on \(\gr\).                                                          \\
            \(\lambda_i\)                      & The \(\myord{i}\) largest eigenvalue of \(\matp\) sorted by modulus.                                       \\
            \(\lambda\)                        & Denoted as \(\abs{\lambda_2}\).                                                                            \\
            \(\retnum{i}{j}\)                  & The returning times to node \(i\) of the \(\myord{j}\) \(l\)-truncated random walk that starts from \(i\). \\
            \(\passnum{i}{j}\)                 & The returning times to node \(i\) of the \(\myord{j}\) absorbing random walk that starts from \(i\).       \\
            \(\meanretnum{i},\meanpassnum{i}\) & The empirical mean of \(\retnum{i}{j}\) and \(\passnum{i}{j}\).                                            \\
            \bottomrule
      \end{tabularx}
\end{table}

\subsection{Graph and Random Walk}\label{subsec:graph-rand}

Let \(\gr=(V,E)\) be a digraph, where \(V\) is the set of nodes, and \(E\) is the set of edges. The digraph \(\gr\) has a total of \(n=\abs{V}\) nodes and \(m=\abs{E}\) edges. Throughout this paper, all the digraphs mentioned are assumed to be strongly connected without explicit qualification.

The adjacency matrix \(\mata\) of graph \(\gr\) mathematically encodes its topological properties. Here, the entry \(\mata_{i,j}\) represents the adjacency relation between nodes \(i\) and \(j\). \(\mata_{i,j}=1\) if there exists a directed edge pointing from \(i\) to \(j\). Conversely, the absence of such an edge is indicated by \(\mata_{i,j}=0\). In a digraph \(\gr\), the out-degree \(d_i\) of node \(i\) is defined as the number of its out-neighbours. If we denote the diagonal out-degree matrix of digraph \(\gr\) as \(\matd=\diag{d_1,d_2,\dots,d_n}\), the Laplacian matrix of \(\gr\) is defined as \(\lap=\matd-\mata\).

% For an undirected graph, its Laplacian matrix is symmetric and positive semidefinite. These properties are crucial for efficiently estimating Kemeny's constant~\cite{LiHuLe21,XuShZhKaZh20}. However, in the case of digraphs, the Laplacian matrices are not symmetric, which poses a key challenge for approximating Kemeny's constant of digraphs.

For a digraph \(\gr\) with \(n\) nodes, a random walk on \(\gr\) is defined through its transition matrix \(\matp\in\rea^{n\times n}\). At each step, if the walker is at node \(i\), it moves to an out-neighbour \(j\) with equal probability \(\matp_{i,j}\). It follows readily that \(\matp=\matd^{-1}\mata\).
Assuming \(\matp\) is finite, aperiodic, and irreducible, the random walk has an unique stationary distribution \(\vecpi=\left(\pi_1,\pi_2,\cdots,\pi_n\right)^\top\), satisfying \(\vecpi^\top\vecone=1\) and \(\vecpi^\top\matp=\matp\). Clearly, \(\vecpi\) is the left \(1\)-eigenvector of \(\matp\). Let \(\lambda_1,\lambda_2,\dots,\lambda_n\) be the eigenvalues of \(\matp\), where \(1=\abs{\lambda_1}>\abs{\lambda_2}\geq\cdots\geq\abs{\lambda_n}\). The second largest eigenvalue of \(\matp\) is crucial to our algorithms, whose modulus is denoted as \(\lambda\).

For a random walk on digraph \(\gr\), numerous associated quantities can be expressed in terms of the fundamental matrix \(\fund\)~\cite{MeCa75}. For a random walk with transition matrix \(\matp\), the fundamental matrix is defined as the group inverse of \(\mati-\matp\):
\begin{equation*}
      \fund=\spar{\mati-\matp}^\#=\bigspar{\mati-\matp+\vecone\vecpi^\top}^{-1}-\vecone\vecpi^\top.
\end{equation*}
As the generalized inverse, \(\fund\) satisfies \(\spar{\mati-\matp}\fund\spar{\mati-\matp}=\mati-\matp\).
% \begin{equation*}
%     \spar{\mati-\matp}\fund\spar{\mati-\matp}=\mati-\matp.
% \end{equation*}
Additionally, we can easily prove that \(\fund\) and \(\mati-\matp\) share the same left null space and right null space. Concretely, \(\fund\vecone=\spar{\mati-\matp}\vecone=\veczero\) and \(\vecpi^\top\fund=\vecpi^\top\spar{\mati-\matp}=\veczero^\top\).

\subsection{Hitting Time and Kemeny's Constant}\label{subsec:hit-kem}

A key concept in random walks is hitting time~\cite{Lo93,CoBeTeVoKl07}. The hitting time \(\hit{i}{j}\) is defined as the expected time for a random walker originating at node \(i\) to arrive at node \(j\) for the first time.
Several important quantities can be derived from the hitting time, here we only consider the absorbing random-walk centrality and Kemeny's constant.

For a node \(s\) in the \(n\)-node digraph \(\gr=(V,E)\), its absorbing random-walk centrality \(\absorb{s}\) is defined as \(\absorb{s}=\sum_{i=1}^n\pi_i\hit{i}{s}\). Lower values of \(\absorb{s}\) indicate higher importance for node \(s\), which has been analyzed extensively~\cite{TeBeVo09,Be09,Be16}. For brevity, we refer to \(\absorb{s}=\sum_{i=1}^{n}\pi_i\hit{i}{s}\) as \textit{walk centrality} henceforth.
For an \(n\)-node digraph \(\gr=(V,E)\), its Kemeny's constant \(\kem\) is defined as the expected steps for a walker starting from node \(u\) to node \(i\) selected with probability \(\pi_i\). Formally, \(\kem=\sum_{i=1}^{n}\pi_i\hit{u}{i}\).
The invariance of Kemeny's constant \(\kem\) stems from the fact that its value remains unchanged regardless of the chosen starting node \(u\).

As discussed in \secref{subsec:graph-rand}, many quantities associated with random walks are determined by the fundamental matrix \(\fund\). For example, the hitting time \(\hit{i}{j}\) satisfies \(\hit{i}{j}=\pi_j^{-1}\spar{\fund_{j,j}-\fund_{i,j}}\)~\cite{MeCa75}. Therefore, the walk centrality can be expressed as
\begin{equation}\label{eq:walk-centr}
      \absorb{s}=\sum_{i=1}^{n}\pi_i\hit{i}{s}=\sum_{i=1}^{n}\frac{\pi_i}{\pi_s}\spar{\fund_{s,s}-\fund_{i,s}}=\frac{\fund_{s,s}}{\pi_s},
\end{equation}
while Kemeny's constant can be represented as
\begin{equation}\label{eq:kem-exact}
      \kem=\sum_{i=1}^{n}\pi_i\hit{u}{i}=\sum_{i=1}^{n}\spar{\fund_{i,i}-\fund_{u,i}}=\normtrace{\fund}.
\end{equation}

The Kemeny constant can be exactly computed by using~\eqref{eq:kem-exact}. However, this formula requires all the diagonal elements of a group inverse. Since the complexity of matrix inversion is \(O(n^3)\), direct computation of Kemeny's constant is impractical for large-scale networks with millions of nodes.

\subsection{Existing Methods}\label{subsec:exist-methods}

\ifthenelse{\boolean{fullver}}{
      In this subsection, we provide an overview of the existing methods for estimating Kemeny's constant. The first existing method \algoname{ApproxKemeny} is based on solving Laplacian systems of linear equations, while the second existing method \algoname{DynamicMC} simulates truncated random walks. Both of these two previous algorithms are only suitable for undirected graphs.
}

\subsubsection{Method based on Laplacian solver}

For an undirected graph, its Kemeny's constant is equal to the trace of \(\normlap^\dagger\), where \(\normlap\) denotes the normalized Laplacian matrix. Using Hutchinson's method~\cite{Hut89}, \algoname{ApproxKemeny} by Xu~\textit{et al.}~\cite{XuShZhKaZh20} reformulates estimating \(\bigtrace{\normlap^\dagger}\) as approximating the quadratic forms of \(\normlap^\dagger\), which is connected to solving linear equations associated with the Laplacian matrix. Leveraging a nearly linear-time Laplacian solver~\cite{KySa16}, \algoname{ApproxKemeny} attains nearly linear-time complexity in terms of edge number. However, as shown in \secref{sec:experim}, the high memory usage of Laplacian solver makes this algorithm impractical for large-scale networks. Additionally, the Laplacian solver inherently leverages specific properties of undirected graphs, precluding its application to \algoname{ApproxKemeny} for digraphs.

\subsubsection{Method based on truncated random walks}

To estimate Kemeny's constant, \algoname{DynamicMC}~\cite{LiHuLe21} simulates truncated random walks starting from each node in the network, and then sums up the probabilities of returning to the source. \algoname{DynamicMC} also incorporates a heuristic strategy, where the simulation terminates once the change of the sum falls below a given threshold. Under the parallel GPU implementation, \algoname{DynamicMC} achieves exceptional performance surpassing previous methods. However, the exceptional performance largely stems from GPU implementation, which is not competitive under fair comparison in \secref{sec:experim}. Additionally, the performance limitations of \algoname{DynamicMC} are two-fold. First, the inclusion of this heuristic strategy lacks a theoretical guarantee for maintaining the accuracy. Second, \algoname{DynamicMC} relies on theoretical analysis for undirected graphs, whose Kemeny's constant is expressed as the infinite sum over powers of \(\lambda_i\). This expression cannot  be extended to digraphs, since the matrix \(\matp\)  for a digraph loses diagonalizability and \(\lambda_i\) becomes complex-valued. Hence, \algoname{DynamicMC} cannot be directly extended to digraphs. In contrast, we address this issue by reformulating Kemeny's constant as the trace of fundamental matrix, which holds for both directed and undirected graphs.

\section{Theoretical Results}\label{sec:theor-res}

\ifthenelse{\boolean{fullver}}{
      In this section, we provide a series of theoretical results for Kemeny's constant. First, we demonstrate that, given the estimation of the stationary distribution of a random walk, the diagonal elements of the fundamental matrix can be well approximated through a truncated sum. This result motivates the design of Monte Carlo estimator for Kemeny's constant based on truncated random walks. Second, we propose an alternative formula relating Kemeny's constant to the walk centrality of a specific node \(s\) and the inverse of a submatrix \(\mati-\matp_{-s}\). This result reformulates the estimation of Kemeny's constant, which prompts us to design Monte Carlo algorithm based on forest sampling.
}

\subsection{Approximation for  Fundamental Matrix by Truncated Sum}

As shown in~\eqref{eq:kem-exact}, the Kemeny constant is intimately related to the diagonal elements of the fundamental matrix \(\fund\). In this subsection, we first demonstrate that the trace of \(\fund\) can be approximated by an \(l\)-truncated sum \(\truncfund\) with an additive error bound. Furthermore, we prove that the diagonal elements of \(\fund\) can also be approximated by \(\truncfund\) with arbitrary accuracy.

\begin{lemma}\label{lem:infin-sum}
      Let \(\gr=(V,E)\) be a digraph with transition matrix \(\matp\) and stationary distribution \(\vecpi\). The fundamental matrix \(\fund\) can be expressed as
      \begin{equation*}
            \fund=\spar{\mati-\matp}^\#=\sum_{k=0}^{\infty}\bigspar{\matp^k-\vecone\vecpi^\top}.
      \end{equation*}
\end{lemma}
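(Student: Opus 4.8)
The plan is to introduce the rank-one matrix $\matpi = \vecone\vecpi^\top$, show that the partial sums of the series telescope against $\mati-\matp$, and then identify the limit with the closed-form definition of $\fund$ already recorded in \secref{subsec:graph-rand}. The whole argument hinges on the spectral gap $1 = \abs{\lambda_1} > \abs{\lambda_2} = \lambda$, which guarantees simultaneously that the summands decay and that the resulting Neumann series converges.

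First I would collect the elementary algebraic identities satisfied by $\matpi$. From $\vecpi^\top\vecone = 1$ it follows that $\matpi^2 = \matpi$, and from $\matp\vecone = \vecone$ together with $\vecpi^\top\matp = \vecpi^\top$ one obtains $\matp\matpi = \matpi\matp = \matpi$. Setting $\matm = \matp - \matpi$, these relations give by a one-line induction $\matm^k = \matp^k - \matpi$ for every $k\geq 1$, which is exactly the general term of the series for $k\geq 1$. This is what lets me replace the series with a Neumann series in $\matm$.

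The key step is to control the spectrum of $\matm$. I would use the splitting $\rea^n = \mathrm{span}(\vecone) \oplus W$ with $W = \setof{\vecx : \vecpi^\top\vecx = 0}$. Both summands are $\matp$-invariant, while $\matpi$ annihilates $W$ and fixes $\mathrm{span}(\vecone)$; hence $\matm$ sends $\vecone$ to $\veczero$ (eigenvalue $0$) and agrees with $\matp$ on $W$. Since the simple Perron eigenvalue $\lambda_1 = 1$ corresponds to $\mathrm{span}(\vecone)$, the spectrum of $\matm$ is $\setof{0,\lambda_2,\dots,\lambda_n}$, so its spectral radius equals $\lambda < 1$. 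Consequently $\Abs{\matm^k}$ decays geometrically and $\sum_{k=0}^\infty \matm^k = \spar{\mati-\matm}^{-1}$ converges. I expect this to be the main obstacle: because $\matp$ need not be diagonalizable for a digraph, I cannot pass through an eigenbasis and must argue purely through the invariant-subspace decomposition (equivalently, through Gelfand's formula applied to $\matm$) to obtain geometric decay independently of the Jordan structure.

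Finally I would reassemble the series by splitting off the $k=0$ term: $\sum_{k=0}^\infty\bigspar{\matp^k - \matpi} = \spar{\mati-\matpi} + \sum_{k=1}^\infty \matm^k = \spar{\mati-\matm}^{-1} - \matpi$. Since $\mati - \matm = \mati - \matp + \vecone\vecpi^\top$, the right-hand side is precisely $\bigspar{\mati-\matp+\vecone\vecpi^\top}^{-1} - \vecone\vecpi^\top$, which is the definition of $\fund$ from \secref{subsec:graph-rand}, completing the identification. As a sanity check one could instead verify directly that this limit satisfies the three defining identities of the group inverse of $\mati-\matp$, but matching the closed form is shorter.
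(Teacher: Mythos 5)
Your proposal is correct and follows essentially the same route as the paper's proof: both identify the series with the Neumann expansion of $\spar{\mati-\matp+\vecone\vecpi^\top}^{-1}$, using the spectral radius bound on $\matp-\vecone\vecpi^\top$ and the induction $\spar{\matp-\vecone\vecpi^\top}^k=\matp^k-\vecone\vecpi^\top$. The only difference is direction and level of detail — the paper expands the closed form of $\fund$ into the series and leaves the spectral radius claim and the induction as remarks, while you sum the series to recover the closed form and supply the invariant-subspace argument explicitly.
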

\begin{proof}
      According to the Perron-Frobenius theorem~\cite{Bi93}, the \(1\)-eigenvalue is simple. Recall that the left and right \(1\)-eigenvector of \(\matp\) is, respectively, \(\vecpi^\top\) and \(\vecone\). Therefore, the spectral radius of \(\matp-\vecone\vecpi^\top\) is lower than \(1\), and \(\fund\) can be represented as
      \begin{align*}
            \fund & =\spar{\mati-\matp}^\# =\bigspar{\mati-\matp+\vecone\vecpi^\top}^{-1}-\vecone\vecpi^\top                                              \\
                  & =\midpar{\mati-\bigspar{\matp-\vecone\vecpi^\top}}^{-1}-\vecone\vecpi^\top                                                            \\
                  & =-\vecone\vecpi^\top+\sum_{k=0}^{\infty}\bigspar{\matp-\vecone\vecpi^\top}^k=\sum_{k=0}^{\infty}\bigspar{\matp^k-\vecone\vecpi^\top}.
      \end{align*}
      where the last equality can be easily obtained through mathematical induction.
\end{proof}

% \ifthenelse{\boolean{fullver}}{\input{src/proofs/infin-sum.tex}}

\lemref{lem:infin-sum} indicates that \(\fund\) can be represented by an infinite sum. Therefore, we attempt to approximate \(\fund\) by an \(l\)-truncated sum, which is defined as \(\truncfund=\sum_{k=0}^{l}\bigspar{\matp^k-\vecone\vecpi^\top}\). We begin by approximating \(\normtrace{\fund}\) by \(\bigtrace{\truncfund}\) with a theoretical error bound.

\begin{lemma}\label{lem:trace-truncate}
      Let \(\gr=(V,E)\) be an \(n\)-node digraph with transition matrix \(\matp\). For any \(\epsilon>0\), if \(l\) is selected satisfying \(l\geq\frac{\log\spar{n^{-1}(\epsilon-\epsilon\lambda)}}{\log\spar{\lambda}}\), then we have
      \ifthenelse{\not\boolean{fullver}}{
            \(\abs{\normtrace{\fund}-\bigtrace{\truncfund}}\leq\epsilon\).
      }{
            \begin{equation*}
                  \abs{\normtrace{\fund}-\bigtrace{\truncfund}}\leq\epsilon.
            \end{equation*}
      }
\end{lemma}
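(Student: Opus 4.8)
The plan is to realize the difference of traces as the tail of the convergent series from \lemref{lem:infin-sum} and then control that tail by a geometric series. First I would subtract the truncated sum $\truncfund=\sum_{k=0}^{l}\bigspar{\matp^k-\vecone\vecpi^\top}$ from the series representation of $\fund$ furnished by \lemref{lem:infin-sum}, giving
\begin{equation*}
      \fund-\truncfund=\sum_{k=l+1}^{\infty}\bigspar{\matp^k-\vecone\vecpi^\top},
\end{equation*}
whence, by linearity of the trace, $\abs{\normtrace{\fund}-\bigtrace{\truncfund}}=\bigabs{\sum_{k=l+1}^{\infty}\bigtrace{\matp^k-\vecone\vecpi^\top}}$.

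Next I would reuse the identity $\matp^k-\vecone\vecpi^\top=\bigspar{\matp-\vecone\vecpi^\top}^k$ valid for every $k\geq1$, which is exactly the induction invoked in the proof of \lemref{lem:infin-sum}. Setting $\matm=\matp-\vecone\vecpi^\top$, each tail term becomes $\bigtrace{\matm^k}$. The crucial step is to pin down the spectrum of $\matm$: since $\vecone$ and $\vecpi^\top$ are the right and left $1$-eigenvectors of $\matp$ with $\vecpi^\top\vecone=1$, the rank-one deflation $\matm$ has eigenvalues $0,\lambda_2,\dots,\lambda_n$, i.e. the simple eigenvalue $\lambda_1=1$ is shifted to $0$ while all others are left unchanged. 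Because the trace of a matrix power equals the sum of the $k$-th powers of its eigenvalues counted with algebraic multiplicity — a fact that follows from the Schur (or Jordan) form and therefore does \emph{not} require $\matm$ to be diagonalizable — I obtain $\bigtrace{\matm^k}=\sum_{i=2}^{n}\lambda_i^k$, and hence $\bigabs{\bigtrace{\matm^k}}\leq\sum_{i=2}^{n}\abs{\lambda_i}^k\leq n\lambda^k$ using $\abs{\lambda_i}\leq\lambda$ for all $i\geq2$.

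Finally I would sum the geometric bound and match it to the hypothesis:
\begin{equation*}
      \abs{\normtrace{\fund}-\bigtrace{\truncfund}}\leq\sum_{k=l+1}^{\infty}n\lambda^k\leq n\sum_{k=l}^{\infty}\lambda^k=\frac{n\lambda^l}{1-\lambda}.
\end{equation*}
It then remains to check that $\frac{n\lambda^l}{1-\lambda}\leq\epsilon$ is equivalent to $\lambda^l\leq n^{-1}(\epsilon-\epsilon\lambda)$, and that taking logarithms (remembering $\log\lambda<0$, which reverses the inequality) rearranges this into precisely the stated threshold $l\geq\frac{\log\spar{n^{-1}(\epsilon-\epsilon\lambda)}}{\log\lambda}$; the case where this threshold is nonpositive is vacuous since any $l\geq0$ already works.

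I expect the only genuine obstacle to be the spectral identification of $\matm$, because for a general digraph $\matp$ is not diagonalizable and $\lambda_2,\dots,\lambda_n$ may be complex, so I must argue at the level of eigenvalues counted with multiplicity rather than via an eigenbasis. The cleanest route is Brauer's theorem on rank-one perturbations: adding $-\vecone\vecpi^\top$ to $\matp$ replaces the eigenvalue $1$ by $1-\vecpi^\top\vecone=0$ and fixes the remaining spectrum. Everything after that is a routine geometric-series estimate.
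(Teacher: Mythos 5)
Your proposal is correct and takes essentially the same route as the paper: realize the difference of traces as the tail \(\sum_{k=l+1}^{\infty}\bigtrace{\matp^k-\vecone\vecpi^\top}\), identify each term with \(\sum_{i=2}^{n}\lambda_i^k\) (valid without diagonalizability, via the Schur/Jordan form), bound it by \(n\lambda^k\), and sum the geometric series against the stated threshold. The only cosmetic difference is in how the key identity is obtained: you deflate to \(\matp-\vecone\vecpi^\top\) and invoke Brauer's theorem on rank-one perturbations, whereas the paper gets the same conclusion directly from linearity of the trace, \(\bigtrace{\vecone\vecpi^\top}=\vecpi^\top\vecone=1\), and the simplicity of the eigenvalue \(1\).
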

\begin{proof}
    \begin{align*}
        \abs{\trace{\fund}-\bigtrace{\truncfund}}
         & =\abs{\sum_{k=l+1}^{\infty}\spar{\bigtrace{\matp^k}-1}}\leq\sum_{k=l+1}^{\infty}\sum_{i=2}^{n}\abs{\lambda_i}^k \\
         & \leq n\sum_{k=l+1}^{\infty}\lambda^k=\frac{n\lambda^{l+1}}{1-\lambda}\leq{\epsilon}.
    \end{align*}
    This finishes the proof.
\end{proof}

% \ifthenelse{\boolean{fullver}}{\input{src/proofs/trace-truncate.tex}}

After giving a theoretical bound of \(\bigtrace{\truncfund}\), we next approximate the \(\myord{i}\) diagonal element of \(\fund\) with arbitrary accuracy. This poses a greater challenge, as \(\matp\) is not diagonalizable for digraphs. In order to bound the error introduced by the truncated length \(l\), we need the following lemma.

\begin{lemma}\cite{AlFi02}  \label{lem:dist-deviat}
      For a digraph \(\gr=(V,E)\) with transition matrix \(\matp\) and stationary distribution \(\vecpi\), suppose there exists a probability measure \(\mu\), a real number \(\delta>0\) and  time \(t\) such that \(\matp^t_{i,j}\geq\delta\mu_j\) holds for any nodes \(i,j\in V\). Then for any node \(i\in V\) and integer \(k\geq0\), we have
      \ifthenelse{\not\boolean{fullver}}{
      \(\abs{\matp^k_{i,i}-\pi_i}\leq\spar{1-\delta}^{\floor{\nicefrac{k}{t}}}\).
      }{
      \begin{equation*}
            \abs{\matp^k_{i,i}-\pi_i}\leq\spar{1-\delta}^{\floor{\nicefrac{k}{t}}}.
      \end{equation*}
      }
\end{lemma}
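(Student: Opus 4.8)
The plan is to turn the Doeblin-type minorization hypothesis \(\matp^t_{i,j}\ge\delta\mu_j\) into a geometric contraction that acts once per block of \(t\) steps. First I would extract a stochastic ``residual'' matrix from the \(t\)-step kernel by setting \(\matq\defeq\spar{1-\delta}^{-1}\bigspar{\matp^t-\delta\vecone\mu^\top}\). Nonnegativity of its entries is exactly the hypothesis \(\matp^t_{i,j}\ge\delta\mu_j\), and each row sums to \(\spar{1-\delta}^{-1}\bigspar{\sum_j\matp^t_{i,j}-\delta\sum_j\mu_j}=\spar{1-\delta}^{-1}\spar{1-\delta}=1\), so \(\matq\) is row-stochastic. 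This gives the decomposition \(\matp^t=\delta\vecone\mu^\top+\spar{1-\delta}\matq\), which is the only place the minorization is used.

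Next I would track the deviation from stationarity in blocks. Writing \(q\defeq\floor{k/t}\) and \(\vecv^{(m)}\defeq\vece_i^\top\matp^{mt}-\vecpi^\top\) for the (signed) deviation of the \(mt\)-step distribution from the walker started at \(i\), I note that \(\vecv^{(m)}\) has total mass zero, since \(\vece_i^\top\vecone=\vecpi^\top\vecone=1\). Using \(\vecpi^\top\matp^t=\vecpi^\top\) I get \(\vecv^{(m+1)}=\vecv^{(m)}\matp^t\); substituting the decomposition, the regeneration term contributes \(\delta\bigspar{\vecv^{(m)}\vecone}\mu^\top=\veczero\) because \(\vecv^{(m)}\vecone=0\), leaving the clean recursion \(\vecv^{(m+1)}=\spar{1-\delta}\vecv^{(m)}\matq\). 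Iterating yields \(\vecv^{(q)}=\spar{1-\delta}^q(\vece_i-\vecpi)^\top\matq^q\), and the leftover \(r\defeq k-qt\in\setof{0,\dots,t-1}\) steps are absorbed via \(\vece_i^\top\matp^k-\vecpi^\top=\vecv^{(q)}\matp^r\) (again using \(\vecpi^\top\matp^r=\vecpi^\top\)). Reading off the \(\myord{i}\) coordinate gives \(\matp^k_{i,i}-\pi_i=\spar{1-\delta}^q(\vece_i-\vecpi)^\top\matm\vece_i\), where \(\matm\defeq\matq^q\matp^r\) is a product of stochastic matrices and hence stochastic.

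The crux of the argument — and the step where a naive estimate fails — is bounding the residual factor \((\vece_i-\vecpi)^\top\matm\vece_i\). A total-variation bound would introduce \(\Abs{\vece_i-\vecpi}_1\le2\), producing an unwanted factor of \(2\) absent from the claimed estimate. The fix is to look at the single coordinate directly: \((\vece_i-\vecpi)^\top\matm\vece_i=\matm_{i,i}-\bigspar{\vecpi^\top\matm}_i\). Here \(\matm_{i,i}\) is an entry of a stochastic matrix, so \(\matm_{i,i}\in[0,1]\), while \(\vecpi^\top\matm\) is again a probability distribution (a distribution times a stochastic matrix), so \(\bigspar{\vecpi^\top\matm}_i\in[0,1]\) as well. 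The difference of two numbers in \([0,1]\) has absolute value at most \(1\), giving \(\abs{\matp^k_{i,i}-\pi_i}\le\spar{1-\delta}^q=\spar{1-\delta}^{\floor{\nicefrac{k}{t}}}\), as desired.

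I expect the main obstacle to be precisely this last bookkeeping — recognizing that one must estimate a single diagonal entry as a difference of two probabilities rather than bound the whole deviation vector in \(\ell_1\); everything else (the minorization decomposition and the block recursion) is routine once the total-mass-zero property is used to cancel the regeneration term.
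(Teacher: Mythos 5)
Your proof is correct, but there is no proof in the paper to compare it with: \lemref{lem:dist-deviat} is stated as a known result with a citation to Aldous and Fill~\cite{AlFi02}, and the paper never proves it (it is used as a black box in the proof of \lemref{lem:diag-truncate}). Your argument is a complete, self-contained replacement for that citation, and it is the standard Doeblin-minorization argument rendered in matrix form: the decomposition \(\matp^t=\delta\vecone\mu^\top+\spar{1-\delta}\matq\) with \(\matq\) row-stochastic, cancellation of the regeneration term \(\delta\bigspar{\vecv^{(m)}\vecone}\mu^\top\) against the mean-zero deviation vector, and a geometric contraction once per block of \(t\) steps. The cited source obtains this type of bound probabilistically, by coupling: in each block of \(t\) steps the chain regenerates from \(\mu\) with probability \(\delta\) and thereafter agrees with a stationary copy. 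Your version trades the coupling construction for linear algebra, which is easier to verify line by line; the probabilistic version buys intuition and generalizes more readily to continuous state spaces, but for the finite setting used here the two are equivalent in content and strength.

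Two small remarks. First, your claim that the naive total-variation route necessarily loses a factor of \(2\) is overstated: the row vector \(\spar{\vece_i-\vecpi}^\top\matm\) has zero total mass, any coordinate of a mean-zero vector is bounded in absolute value by half its \(\ell_1\) norm, and right-multiplication by a stochastic matrix contracts the \(\ell_1\) norm of row vectors, so \(\bigabs{\spar{\vece_i-\vecpi}^\top\matm\vece_i}\leq\frac{1}{2}\bigAbs{\spar{\vece_i-\vecpi}^\top\matm}_1\leq\frac{1}{2}\bigAbs{\vece_i-\vecpi}_1\leq1\), recovering exactly the same constant; your coordinatewise bound is equally valid, just not the only way to avoid the spurious factor. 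Second, if \(\delta=1\) your definition of \(\matq\) divides by zero; this degenerate case should be disposed of separately (then \(\matp^t=\vecone\mu^\top\) forces \(\mu=\vecpi\), hence \(\matp^k_{i,i}=\pi_i\) for \(k\geq t\), while for \(k<t\) the claimed bound \(\spar{1-\delta}^{\floor{\nicefrac{k}{t}}}=1\) is trivial). Neither issue affects the correctness of the argument as a whole.
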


\ifthenelse{\boolean{fullver}}{
Recall that \(\matp^k_{i,i}\) represents the probability of the random walker who starts at \(i\) and returns to \(i\) exactly after \(k\) steps. Therefore, \lemref{lem:dist-deviat} actually bounds the deviation of probability distribution from stationarity.
}

Subsequently, we introduce a theoretical bound of \(\truncfund_{i,i}\):

\begin{lemma}\label{lem:diag-truncate}
      Given an \(n\)-node digraph \(\gr=(V,E)\), let \(\dmax\) be the maximum out-degree, and let \(\tau\) be the diameter of \(\gr\), which is the longest distance between nodes. For any \(\epsilon>0\), if \(l\) is selected satisfying \(l\geq\frac{\tau\log\spar{n\epsilon\tau^{-1}\dmax^{-\tau}}}{\log\spar{1-n\dmax^{-\tau}}}+\tau-1\), then we have
      \ifthenelse{\not\boolean{fullver}}{
            \(\abs{\fund_{i,i}-\truncfund_{i,i}}\leq \epsilon\).
      }{
            \begin{equation*}
                  \abs{\fund_{i,i}-\truncfund_{i,i}}\leq \epsilon.
            \end{equation*}
      }
\end{lemma}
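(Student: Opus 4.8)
The plan is to reduce the claim to a tail estimate and then control that tail with the two earlier results. By \lemref{lem:infin-sum} together with the definition $\truncfund=\sum_{k=0}^{l}\bigspar{\matp^k-\vecone\vecpi^\top}$, the difference is exactly the discarded tail,
\[
\fund_{i,i}-\truncfund_{i,i}=\sum_{k=l+1}^{\infty}\spar{\matp^k_{i,i}-\pi_i},
\]
since the $(i,i)$ entry of $\vecone\vecpi^\top$ is precisely $\pi_i$. Taking absolute values and applying the triangle inequality, it suffices to bound $\sum_{k=l+1}^{\infty}\abs{\matp^k_{i,i}-\pi_i}$, and each summand is exactly the deviation-from-stationarity quantity that \lemref{lem:dist-deviat} controls.

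To invoke \lemref{lem:dist-deviat} I first have to exhibit a minorization (Doeblin) condition $\matp^t_{i,j}\geq\delta\mu_j$ with parameters that reproduce the stated threshold. I would take $t=\tau$, the uniform measure $\mu_j=1/n$, and $\delta=n\dmax^{-\tau}$. The justification is that, $\tau$ being the diameter, every ordered pair $(i,j)$ is joined by a directed walk of length $\tau$, and along any such walk each step has probability at least $1/\dmax$ (every out-degree is at most $\dmax$); hence $\matp^\tau_{i,j}\geq\dmax^{-\tau}=\delta\mu_j$. This is the step I expect to be the main obstacle: unlike the undirected or lazy setting, a strongly connected aperiodic digraph need not admit a walk of length \emph{exactly} $\tau$ between every ordered pair (a shortest path may be strictly shorter and impossible to pad), so some care is needed to make the constant $\delta$ and the base $1-n\dmax^{-\tau}$ of the decay rigorous, for instance by passing to the least $t$ for which $\matp^t$ is entrywise positive and relating that $t$ and its minorizing constant back to $\tau$ and $\dmax$.

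Granting the minorization, \lemref{lem:dist-deviat} gives $\abs{\matp^k_{i,i}-\pi_i}\leq\spar{1-n\dmax^{-\tau}}^{\floor{k/\tau}}$, and the rest is geometric-series bookkeeping. Writing $\rho=1-n\dmax^{-\tau}$, I would group the tail into blocks on which $\floor{k/\tau}$ is constant; each value of the exponent is attained by at most $\tau$ consecutive indices $k$, so
\[
\sum_{k=l+1}^{\infty}\rho^{\floor{k/\tau}}\leq\tau\sum_{j\geq\floor{(l+1)/\tau}}\rho^{j}=\frac{\tau\,\rho^{\floor{(l+1)/\tau}}}{1-\rho}=\frac{\tau\dmax^{\tau}}{n}\,\rho^{\floor{(l+1)/\tau}}.
\]
Requiring this to be at most $\epsilon$ gives $\rho^{\floor{(l+1)/\tau}}\leq n\epsilon\tau^{-1}\dmax^{-\tau}$; taking logarithms (and noting $\log\rho<0$) turns it into $\floor{(l+1)/\tau}\geq\frac{\log\spar{n\epsilon\tau^{-1}\dmax^{-\tau}}}{\log\spar{1-n\dmax^{-\tau}}}$. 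Finally, the hypothesis $l\geq\frac{\tau\log\spar{n\epsilon\tau^{-1}\dmax^{-\tau}}}{\log\spar{1-n\dmax^{-\tau}}}+\tau-1$ is exactly what forces $(l+1)/\tau$ to exceed this right-hand side by at least one, which compensates for the floor and closes the argument.
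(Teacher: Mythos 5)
Your proposal follows essentially the same route as the paper's proof: the same tail decomposition \(\fund_{i,i}-\truncfund_{i,i}=\sum_{k=l+1}^{\infty}\spar{\matp^k_{i,i}-\pi_i}\), the same invocation of \lemref{lem:dist-deviat} with \(t=\tau\), \(\mu\) uniform and \(\delta=n\dmax^{-\tau}\), the same blocking of the tail into a geometric series bounded by \(\tau\spar{1-n\dmax^{-\tau}}^{\floor{\nicefrac{(l+1)}{\tau}}}/\spar{n\dmax^{-\tau}}\), and the same logarithmic bookkeeping to absorb the floor. The obstacle you flag---that a strongly connected digraph need not admit a walk of length exactly \(\tau\) between every ordered pair, so \(\matp^{\tau}_{i,j}\geq\dmax^{-\tau}\) requires justification---is genuine, but the paper's own proof simply asserts this inequality without comment, so your treatment is, if anything, the more careful of the two.
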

\ifthenelse{\boolean{fullver}}{\begin{proof}
    Since \(\tau\) represents the longest distance between all pairs of nodes, the inequality \(\matp^{\tau}_{i,j}\geq\dmax^{-\tau}\) holds for any nodes \(i,j\in V\). Plugging this into \lemref{lem:dist-deviat}, we obtain
    \begin{align*}
        \abs{\fund_{i,i}-\truncfund_{i,i}}
         & =\abs{\sum_{k=l+1}^{\infty}\matp^k_{i,i}-\pi_i}\leq\sum_{k=l+1}^{\infty}\abs{\matp^k_{i,i}-\pi_i}    \\
         & \leq\sum_{k=l+1}^{\infty}\spar{1-n\dmax^{-\tau}}^{\floor{\nicefrac{k}{\tau}}}                        \\
         & \leq\frac{\tau\spar{1-n\dmax^{-\tau}}^{\floor{\nicefrac{(l+1)}{\tau}}}}{n\dmax^{-\tau}}\leq\epsilon,
    \end{align*}
    where the last inequality is due to \(l\geq\frac{\tau\log\spar{n\epsilon\tau^{-1}\dmax^{-\tau}}}{\log\spar{1-n\dmax^{-\tau}}}+\tau-1\).
\end{proof}
}

\subsection{Alternative Formula for Kemeny's Constant}

In this subsection, we introduce an alternative formula that associates Kemeny's constant of digraphs with a submatrix of \(\mati-\matp\). The enhanced diagonal dominance of this submatrix facilitates more accurate approximation.

\begin{theorem}\label{thm:kem-submat}
      Let \(\gr=(V,E)\) be an \(n\)-node digraph with transition matrix \(\matp\), stationary distribution \(\vecpi\) and the fundamental matrix \(\fund\). For any node \(s\in V\), the Kemeny constant of \(\gr\) can be represented as
      \begin{equation}\label{eq:kem-submat}
            \kem=\bigtrace{\subfund}-\absorb{s}.
      \end{equation}
\end{theorem}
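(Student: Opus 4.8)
The plan is to reduce the claim to the two identities already available in the preliminaries: $\kem=\normtrace{\fund}$ from \eqref{eq:kem-exact} and $\absorb{s}=\fund_{s,s}/\pi_s$ from \eqref{eq:walk-centr}. Granting these, the theorem is equivalent to the purely linear-algebraic statement
\begin{equation*}
      \bigtrace{\subfund}=\normtrace{\fund}+\frac{\fund_{s,s}}{\pi_s}.
\end{equation*}
My strategy is to obtain a closed-form expression for the entries of the submatrix inverse $\matn\defeq\subfund$ in terms of $\fund$, and then sum the diagonal.

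Concretely, I would guess and verify that, for all $i,j\in V\setminus\{s\}$,
\begin{equation*}
      \matn_{i,j}=\fund_{i,j}-\fund_{s,j}+\frac{\pi_j}{\pi_s}\bigspar{\fund_{s,s}-\fund_{i,s}}.
\end{equation*}
To confirm this is the correct inverse it suffices to check $\sum_{k\neq s}\spar{\mati-\matp}_{i,k}\matn_{k,j}=\delta_{ij}$ for $i,j\neq s$. The two ingredients driving this computation are the group-inverse identity $\spar{\mati-\matp}\fund=\mati-\vecone\vecpi^\top$ (immediate from the definition of $\fund$ together with $\vecpi^\top\matp=\vecpi^\top$) and the zero-row-sum property $\spar{\mati-\matp}\vecone=\veczero$. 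The only subtlety is bookkeeping the deleted index $s$: each full sum over $k$ acquires a correction from its $k=s$ term, and after invoking $\spar{\mati-\matp}\fund=\mati-\vecone\vecpi^\top$ the stray contributions cancel in pairs, leaving exactly $\delta_{ij}$. I expect this entrywise verification — tracking the removed row and column while applying the group-inverse relation — to be the main obstacle; once the formula is established the remainder is mechanical.

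Summing the diagonal, $\bigtrace{\matn}=\sum_{i\neq s}\matn_{i,i}$ splits into four sums that collapse using the null-space relations stated in the preliminaries: $\fund\vecone=\veczero$ annihilates $\sum_i\fund_{s,i}$, the relation $\vecpi^\top\fund=\veczero^\top$ annihilates $\sum_i\pi_i\fund_{i,s}$, and $\vecpi^\top\vecone=1$ handles $\sum_{i\neq s}\pi_i=1-\pi_s$. After restoring the omitted $i=s$ terms these reductions yield $\bigtrace{\matn}=\normtrace{\fund}+\fund_{s,s}/\pi_s$, and substituting \eqref{eq:kem-exact} and \eqref{eq:walk-centr} gives $\kem=\bigtrace{\subfund}-\absorb{s}$, as claimed.

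As a sanity check and an alternative route, one can instead argue probabilistically: $\matn_{i,j}$ is the expected number of visits to $j$ before absorption at $s$ for a walk started at $i$, which equals $\pi_j\bigspar{\hit{i}{s}+\hit{s}{j}-\hit{i}{j}}$. Taking the diagonal trace then gives $\sum_{i\neq s}\pi_i\bigspar{\hit{i}{s}+\hit{s}{i}}=\absorb{s}+\kem$, since $\hit{s}{s}=0$ and $\kem=\sum_i\pi_i\hit{s}{i}$ by invariance of Kemeny's constant. This reproduces the same identity, with the occupation-time formula for $\matn_{i,j}$ now playing the role of the main lemma to be justified.
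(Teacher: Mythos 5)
Your proposal is correct, and it takes a genuinely different (and in one respect cleaner) route than the paper's. The paper symmetrizes the mismatched left/right null spaces by diagonal scaling---$\normfund=\matpi^{\nicehalf}\fund\matpi^{-\nicehalf}$ and $\diaglap=\matpi^{\nicehalf}\spar{\mati-\matp}\matpi^{-\nicehalf}$---then forms the compression $\matx=\spar{\mati\enspace{-\normvecpi}}\normfund\begin{pmatrix}\mati\\-\normvecpi^\top\end{pmatrix}$, proves the sandwich identity $\diaglap_{-s}\matx\diaglap_{-s}=\diaglap_{-s}$, identifies $\matx=\diaglap_{-s}^{-1}$, and takes traces using similarity invariance of the trace. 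Your entrywise guess $\fund_{i,j}-\fund_{s,j}+\frac{\pi_j}{\pi_s}\spar{\fund_{s,s}-\fund_{i,s}}$ is precisely the unscaled form $\matpi_{-s}^{-\nicehalf}\matx\matpi_{-s}^{\nicehalf}$ of the paper's matrix, so both proofs rest on the same underlying identity; what differs is how it is certified. Your one-sided verification against $\mati-\matp_{-s}$, driven by $\spar{\mati-\matp}\fund=\mati-\vecone\vecpi^\top$ and the zero row sums of $\mati-\matp$, goes through exactly as you sketch: the $\fund_{k,j}$-term yields $\delta_{ij}-\pi_j$, the terms independent of the summation index $k$ die by zero row sums, the $\fund_{k,s}$-term restores $+\pi_j$ because $i\neq s$, and your formula extended to index $k=s$ vanishes identically, so deleting the $s$ row and column costs nothing. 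This buys you two things over the paper: no scaling machinery at all, and no separate invertibility argument, since a one-sided inverse of a square matrix is automatically the inverse---whereas the paper's sandwich identity pins down $\matx$ as $\diaglap_{-s}^{-1}$ only once invertibility of $\diaglap_{-s}$ is invoked. Your trace collapse via $\fund\vecone=\veczero$, $\vecpi^\top\fund=\veczero^\top$, and $\vecpi^\top\vecone=1$ is correct, and the probabilistic alternative is also sound: your occupation-time formula is exactly the physical meaning of $\subfund_{i,j}$ that the paper cites from~\cite{ZhYaLi12}, and substituting $\hit{i}{j}=\pi_j^{-1}\spar{\fund_{j,j}-\fund_{i,j}}$ into $\pi_j\bigspar{\hit{i}{s}+\hit{s}{j}-\hit{i}{j}}$ reproduces your entrywise expression, so the two routes are mutually consistent.
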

\begin{proof}
    As shown in \secref{subsec:graph-rand}, the left and right null vectors of \(\fund\) may be inequivalent. This prompts us to use the diagonally scaled fundamental matrix \(\normfund\). \(\normfund\) is defined as \(\normfund=\matpi^{\nicehalf}\fund\matpi^{-\nicehalf}\), where \(\matpi=\diag{\pi_1,\pi_2,\dots,\pi_n}\). Similarly, we define the diagonally scaled Laplacian as \(\diaglap=\matpi^{\nicehalf}\spar{\mati-\matp}\matpi^{-\nicehalf}\). It is easy to verify that \(\normfund\) and \(\diaglap\) share the same left and right null vector \(\vecpi^{\nicehalf}\).

    Subsequently, we attempt to establish the connection between \(\normfund_{-s}\) and \(\diaglap_{-s}^{-1}\). After properly adjusting the node labels, \(\normfund\) and \(\diaglap\) can be rewritten in block forms as
    \begin{equation*}
        \normfund=
        \begin{pmatrix}
            \normfund_{-s}  & \normfund_{T,s} \\
            \normfund_{s,T} & \normfund_{s,s}
        \end{pmatrix},
        \diaglap=
        \begin{pmatrix}
            \diaglap_{-s}  & \diaglap_{T,s} \\
            \diaglap_{s,T} & \diaglap_{s,s}
        \end{pmatrix},
    \end{equation*}
    where \(T=V\setminus\setof{s}\). If \(\normvecpi\) is denoted as \(\pi_s^{-\nicehalf}\vecpi^{\nicehalf}_{-s}\in\rea^{n-1}\), then we prove that \(\matx=\spar{\mati\enspace {-\normvecpi}}\normfund\begin{pmatrix}\mati\\-\normvecpi^\top\end{pmatrix}\) equals to \(\diaglap_{-s}^{-1}\):
    \begin{align*}
        \diaglap_{-s}\matx\diaglap_{-s}
         & =\diaglap_{-s}\spar{\mati\enspace {-\normvecpi}}\normfund\begin{pmatrix}\mati\\-\normvecpi^\top\end{pmatrix}\diaglap_{-s} \\
         & =\spar{\diaglap_{-s}\ \diaglap_{T,s}}\normfund\begin{pmatrix}\diaglap_{-s}\\\diaglap_{s,T}\end{pmatrix},
        %  & =\spar{\mati\ \veczero}\diaglap\normfund\diaglap\begin{pmatrix}\mati\\\veczero^\top\end{pmatrix}=\diaglap_{-s}.
    \end{align*}
    where the last equality is due to \(\spar{\normvecpi\enspace 1}\diaglap=\veczero^\top\) and \(\diaglap\begin{pmatrix}\normvecpi\\1\end{pmatrix}=\veczero\).
    Then, we obtain
    \begin{align*}
        \diaglap_{-s}\matx\diaglap_{-s}
         & =\spar{\diaglap_{-s}\ \diaglap_{T,s}}\normfund\begin{pmatrix}\diaglap_{-s}\\\diaglap_{s,T}\end{pmatrix}=\spar{\mati\enspace \veczero}\diaglap\normfund\diaglap\begin{pmatrix}\mati\\\veczero^\top\end{pmatrix} \\
         & =\spar{\mati\enspace \veczero}\diaglap\begin{pmatrix}\mati\\\veczero^\top\end{pmatrix}=\diaglap_{-s}.
    \end{align*}
    Finally, we are able to prove Equation~\eqref{eq:kem-submat} as
    \begin{align*}
          & \bigtrace{\subfund}=\bigtrace{\diaglap_{-s}^{-1}}=\trace{\matx}                                                                \\
        = & \bigtrace{\normfund_{-s}}-\normvecpi^\top\normfund_{T,s}-\normfund_{s,T}\normvecpi+\normfund_{s,s}\normvecpi^\top\normvecpi    \\
        = & \bigtrace{\normfund_{-s}}+\normfund_{s,s}\spar{\normvecpi^\top\normvecpi+2}=\bigtrace{\normfund}+\frac{\normfund_{s,s}}{\pi_s} \\
        = & \trace{\fund}+\frac{\fund_{s,s}}{\pi_s}=\kem+\absorb{s}.
    \end{align*}
    Here the fourth equality is due to \(\spar{\normvecpi\enspace 1}\normfund=\veczero^\top\) and \(\normfund\begin{pmatrix}\normvecpi\\1\end{pmatrix}=\veczero\).
\end{proof}

% \ifthenelse{\boolean{fullver}}{\input{src/proofs/kem-submat.tex}}

\thmref{thm:kem-submat} reveals that for any selected node \(s\), the estimation of Kemeny's constant boils down to the evaluation of the trace of \(\subfund\) and the walk centrality of \(s\). This alternative formula motivates us to design an approximation algorithm that estimates these two quantities separately.

\section{Algorithm Design}\label{sec:alg-design}

\ifthenelse{\boolean{fullver}}{
      After obtaining theoretical results for Kemeny's constant of digraphs, we next utilize them to design efficient algorithms to approximate it. In this section, we first propose a Monte Carlo algorithm with the help of \lemref{lem:infin-sum}. This algorithm is based on truncated random walks with several optimization techniques. Additionally, we present another Monte Carlo algorithm with the help of \thmref{thm:kem-submat}. This algorithm randomly samples directed rooted spanning trees of given digraph. After describing each approximate algorithm, we give the analysis of its time complexity and theoretical accuracy.
}

\subsection{Truncated Random Walk Based  Algorithm }\label{subsec:alg-trunc}

Combining~\eqref{eq:kem-exact} with \lemref{lem:trace-truncate}, we can approximate Kemeny's constant of a digraph with an \(l\)-truncated sum, which is defined as
\begin{equation}\label{eq:def-trunckem}
      \trunckem=\bigtrace{\truncfund}=\sum_{k=0}^{l}\spar{\bigtrace{\matp^k}-1}.
\end{equation}
We note that this estimator is the same as that in \algoname{DynamicMC}~\cite{LiHuLe21}, but is proposed from a different perspective. Equation~\eqref{eq:def-trunckem} indicates that \algoname{DynamicMC} actually supports digraphs. While the analysis of \algoname{DynamicMC} is restricted to undirected graphs, the approximation error of \(\trunckem\) for digraphs is analyzed in \lemref{lem:trace-truncate}. Since the direct computation of \(\matp^k_{i,i}\) involves time-consuming matrix multiplication, we resort to a Monte Carlo approach similar to \algoname{DynamicMC}. Concretely, for each node \(i\in V\), we simulate \(r\) independent \(l\)-truncated random walks starting from \(i\). Let \(\retnum{i}{j}\) denote the times of \(\myord{j}\) random walk that return to \(i\), and let \(\meanretnum{i}\) denote the empirical mean of \(\retnum{i}{j}\), then we can give an unbiased estimator of \(\trunckem\) based on \(\meanretnum{i}\), which is defined as \(\approxkem=n-l-1+\sum_{i=1}^{n}\meanretnum{i}\).

Although \(\approxkem\) is unbiased, we have to simulate numerous truncated random walks to reduce the approximation error. The loose theoretical error bound leads to excessive simulation, making the approximation algorithm inefficient in practice. Therefore, we utilize several optimization techniques. Recall that the truncated random walk is simulated through two steps: the iteration over nodes and the simulation of random walks starting from each node. Below we make improvements from these two perspectives.

\subsubsection{Adaptive simulation from each node}

In the analysis of \algoname{DynamicMC}, Hoeffding's inequality is utilized to derive a theoretical bound.

\begin{lemma}[Hoeffding's inequality]\label{lem:hoeffding}
      Let \(x_1,x_2,\dots,x_n\) be \(n\) independent random variables such that \(a\leq x_i\leq b\) for \(i=1,2,\dots,n\). Let \(x=\sum_{i=1}^{n}x_i\), then for any \(\epsilon>0\),
      \begin{equation*}
            \Pr\spar{\abs{x-\mean{x}}\geq\epsilon}\leq2\exp\setof{-\frac{2\epsilon^2}{n\spar{b-a}^2}}.
      \end{equation*}
\end{lemma}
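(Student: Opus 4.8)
The plan is to establish the one-sided tail bound $\Pr\spar{x-\mean{x}\geq\epsilon}\leq\exp\setof{-\frac{2\epsilon^2}{n\spar{b-a}^2}}$ via the Chernoff exponential-moment method, and then recover the two-sided statement by applying the same argument to the lower tail and combining the two estimates with a union bound, which is precisely what produces the factor of $2$.

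First I would invoke the exponential Markov inequality: for every $t>0$,
\begin{equation*}
\Pr\spar{x-\mean{x}\geq\epsilon}=\Pr\spar{\rme^{t\spar{x-\mean{x}}}\geq\rme^{t\epsilon}}\leq\rme^{-t\epsilon}\mean{\rme^{t\spar{x-\mean{x}}}}.
\end{equation*}
Using the independence of $x_1,\dots,x_n$, the moment generating function factorizes as $\mean{\rme^{t\spar{x-\mean{x}}}}=\prod_{i=1}^{n}\mean{\rme^{t\spar{x_i-\mean{x_i}}}}$, so it suffices to control each factor separately.

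The crux is an auxiliary moment-generating-function estimate (Hoeffding's lemma): for a centered random variable $y$ whose support lies in an interval of length $b-a$, one has $\mean{\rme^{ty}}\leq\rme^{t^2\spar{b-a}^2/8}$. I would prove this by writing $\psi(t)=\log\mean{\rme^{ty}}$ and checking that $\psi(0)=0$, $\psi'(0)=\mean{y}=0$, and $\psi''(t)\leq\spar{b-a}^2/4$; the last bound holds because $\psi''(t)$ equals the variance of $y$ under the exponentially tilted measure with density proportional to $\rme^{ty}$, and any random variable supported in an interval of length $b-a$ has variance at most $\spar{b-a}^2/4$. A second-order Taylor expansion of $\psi$ about $0$ then gives $\psi(t)\leq t^2\spar{b-a}^2/8$. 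This convexity-and-variance argument is the main obstacle, since every other step is purely mechanical once the per-term bound is in hand.

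Since each centered variable $x_i-\mean{x_i}$ lies in an interval of length $b-a$, applying the lemma term by term yields $\mean{\rme^{t\spar{x-\mean{x}}}}\leq\rme^{nt^2\spar{b-a}^2/8}$, hence
\begin{equation*}
\Pr\spar{x-\mean{x}\geq\epsilon}\leq\exp\setof{-t\epsilon+\frac{nt^2\spar{b-a}^2}{8}}.
\end{equation*}
Finally I would optimize the exponent over $t>0$; the minimizer $t=\frac{4\epsilon}{n\spar{b-a}^2}$ gives $\exp\setof{-\frac{2\epsilon^2}{n\spar{b-a}^2}}$. Running the identical computation on $\mean{x}-x$ and summing the two one-sided bounds then delivers the stated two-sided inequality with its factor of $2$.
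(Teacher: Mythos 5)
Your proof is correct: the Chernoff exponential-moment bound, the factorization of the moment generating function by independence, Hoeffding's lemma proved via the second derivative of $\psi(t)=\log\mean{\rme^{ty}}$ (the variance under the tilted measure, bounded by $\spar{b-a}^2/4$ for a variable supported in an interval of length $b-a$), the optimization $t=\frac{4\epsilon}{n\spar{b-a}^2}$, and the union bound giving the factor of $2$ are all the standard, complete argument. Note, however, that the paper itself offers no proof to compare against --- it states this lemma as the classical Hoeffding inequality and uses it purely as an off-the-shelf tool (in the analyses of \algoname{ImprovedMC} and \algoname{TreeMC}), so your contribution here is a self-contained derivation of a result the authors simply cite.
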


However, as Hoeffding's inequality does not consider the variance of random variables, the provided theoretical bound tends to be relatively loose. To address this limitation, we leverage the empirical Bernstein inequality~\cite{HuSeTa07}:

\begin{lemma}\label{lem:bernstein}
      Let \(X_1,X_2,\dots,X_n\) be \(n\) real-valued i.i.d. random variables that satisfy \(0\leq X_i\leq X_{\mathrm{sup}}\). If we denote \(\bar{X}\) and \(X_{\mathrm{var}}\) as the empirical mean and the empirical variance of \(X_i\), then we have
      \begin{equation*}
            \Pr\spar{\abs{\bar{X}-\mathbb{E}[\bar{X}]}\geq \bernfunc{n,X_{\mathrm{var}},X_{\mathrm{sup}},\delta}}\leq\delta,
      \end{equation*}
      where
      \begin{equation*}
            \bernfunc{n,X_{\mathrm{var}},X_{\mathrm{sup}},\delta}=\sqrt{\frac{2X_{\mathrm{var}}\log\spar{\nicefrac{3}{\delta}}}{n}}+\frac{3X_{\mathrm{sup}}\log\spar{\nicefrac{3}{\delta}}}{n}.
      \end{equation*}
\end{lemma}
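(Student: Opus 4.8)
The statement is the standard empirical Bernstein inequality, so the most economical route is simply to invoke \cite{HuSeTa07}; the plan below reconstructs the self-contained argument, which proceeds in two stages and explains the factor $3$ inside $\log\spar{\nicefrac{3}{\delta}}$. Write $\mu=\mean{\bar{X}}=\mean{X_1}$ and let $\sigma^2=\mathrm{Var}\spar{X_1}$ denote the \emph{true} variance. The target bound uses the observable $X_{\mathrm{var}}$ in place of the unknown $\sigma^2$, so the argument first controls $\abs{\bar{X}-\mu}$ in terms of $\sigma^2$, then controls $\sigma^2$ in terms of $X_{\mathrm{var}}$, and finally merges the two tail events by a union bound.

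First I would apply the classical two-sided Bernstein inequality to the i.i.d.\ bounded variables $X_1,\dots,X_n$. Since $0\leq X_i\leq X_{\mathrm{sup}}$, Bernstein's bound reads
\begin{equation*}
      \Pr\spar{\abs{\bar{X}-\mu}\geq t}\leq 2\exp\setof{-\frac{n t^2}{2\sigma^2+\tfrac{2}{3}X_{\mathrm{sup}}t}}.
\end{equation*}
Equating the right-hand side to $\tfrac{2}{3}\delta$ and solving the resulting quadratic in $t$ yields, with probability at least $1-\tfrac{2}{3}\delta$, an estimate of the shape $\abs{\bar{X}-\mu}\leq\sqrt{2\sigma^2\log\spar{\nicefrac{3}{\delta}}/n}+cX_{\mathrm{sup}}\log\spar{\nicefrac{3}{\delta}}/n$ for an explicit constant $c$. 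This already matches the claimed form, except that $\sigma^2$ is not observable.

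The second stage, which I expect to be the delicate part, bounds the true standard deviation $\sigma$ by the empirical one $\sqrt{X_{\mathrm{var}}}$. Because $X_{\mathrm{var}}$ is unbiased, $\mean{X_{\mathrm{var}}}=\sigma^2$, so the exact Jensen gap $\sigma^2=\mean{\sqrt{X_{\mathrm{var}}}}^2+\mathrm{Var}\spar{\sqrt{X_{\mathrm{var}}}}$ together with subadditivity of the square root gives the deterministic estimate $\sigma\leq\mean{\sqrt{X_{\mathrm{var}}}}+\sqrt{\mathrm{Var}\spar{\sqrt{X_{\mathrm{var}}}}}$. The crux is to see that this gap is genuinely small: I would bound $\mathrm{Var}\spar{\sqrt{X_{\mathrm{var}}}}=O\spar{X_{\mathrm{sup}}^2/n}$ via the Efron--Stein inequality, using that $\sqrt{X_{\mathrm{var}}}$ changes by only $O\spar{X_{\mathrm{sup}}/\sqrt{n}}$ under a single-coordinate change, and control $\mean{\sqrt{X_{\mathrm{var}}}}-\sqrt{X_{\mathrm{var}}}$ by McDiarmid's bounded-differences inequality with the same step constants, on an event of probability at least $1-\tfrac{1}{3}\delta$. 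Chaining these produces $\sigma\leq\sqrt{X_{\mathrm{var}}}+O\spar{X_{\mathrm{sup}}\sqrt{\log\spar{\nicefrac{3}{\delta}}/n}}$.

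Finally I would substitute this bound on $\sigma$ into the Bernstein estimate of the first stage and take a union bound over the two failure events, of probabilities $\tfrac{2}{3}\delta$ and $\tfrac{1}{3}\delta$, which is exactly why the threshold carries $\log\spar{\nicefrac{3}{\delta}}$. Replacing $\sqrt{\sigma^2}$ by $\sqrt{X_{\mathrm{var}}}$ in the leading term generates an extra additive $O\spar{X_{\mathrm{sup}}\log\spar{\nicefrac{3}{\delta}}/n}$ which, combined with the Bernstein linear term, collapses into the stated coefficient $3X_{\mathrm{sup}}\log\spar{\nicefrac{3}{\delta}}/n$. Tracking these constants so that the leading factor is exactly $\sqrt{2}$ and the linear factor exactly $3$ is the only remaining bookkeeping, and is the step where I would align precisely with the constants of \cite{HuSeTa07}.
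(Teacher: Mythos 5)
The paper never proves this lemma at all: it is imported verbatim from \cite{HuSeTa07}, exactly as your opening sentence proposes, so on that point you and the paper coincide. The substance of your proposal is therefore the self-contained reconstruction, and its second stage contains a genuine gap.

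Your stage 2 applies Efron--Stein and McDiarmid to $\sqrt{X_{\mathrm{var}}}$ using the single-coordinate perturbation bound $O(X_{\mathrm{sup}}/\sqrt{n})$. That perturbation bound is correct, and in fact tight (move one of $n$ samples sitting at $0$ up to $X_{\mathrm{sup}}$ and the empirical standard deviation jumps from $0$ to $X_{\mathrm{sup}}/\sqrt{n}$), but both conclusions you draw from it are off by the number of coordinates. Efron--Stein sums the squared perturbations over all $n$ coordinates, so it yields $\mathrm{Var}\spar{\sqrt{X_{\mathrm{var}}}}\leq\frac{1}{2}\sum_{i=1}^{n}O\spar{X_{\mathrm{sup}}^2/n}=O\spar{X_{\mathrm{sup}}^2}$, not $O\spar{X_{\mathrm{sup}}^2/n}$; likewise McDiarmid's exponent is $-2t^2/\sum_{i}c_i^2=-2t^2/O\spar{X_{\mathrm{sup}}^2}$, so it certifies only $\bigabs{\sqrt{X_{\mathrm{var}}}-\mean{\sqrt{X_{\mathrm{var}}}}}=O\bigspar{X_{\mathrm{sup}}\sqrt{\log\spar{\nicefrac{1}{\delta}}}}$, with no decay in $n$ whatsoever. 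Chaining what these tools actually deliver gives $\sigma\leq\sqrt{X_{\mathrm{var}}}+O\bigspar{X_{\mathrm{sup}}\sqrt{\log\spar{\nicefrac{3}{\delta}}}}$, and substituting into your stage 1 leaves an additive error of order $X_{\mathrm{sup}}\log\spar{\nicefrac{3}{\delta}}/\sqrt{n}$, a factor $\sqrt{n}$ larger than the term $3X_{\mathrm{sup}}\log\spar{\nicefrac{3}{\delta}}/n$ claimed in the lemma. The intermediate facts you assert are true, but they cannot be reached by worst-case bounded-differences arguments; the known proofs exploit additional structure of the empirical variance. One route is the self-bounding property of $(n-1)X_{\mathrm{var}}/X_{\mathrm{sup}}^2$ combined with the entropy method (Maurer and Pontil), which gives $\sqrt{\mean{X_{\mathrm{var}}}}\leq\sqrt{X_{\mathrm{var}}}+X_{\mathrm{sup}}\sqrt{2\log\spar{\nicefrac{1}{\delta}}/(n-1)}$ directly. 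The other, used in \cite{HuSeTa07}, applies Bernstein's inequality to the squared deviations $\spar{X_i-\mu}^2$ — whose variance is at most $X_{\mathrm{sup}}^2\sigma^2$ — and then solves the resulting quadratic self-consistency inequality in $\sigma$. Without one of these devices, the step you yourself flag as delicate does not go through.
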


\lemref{lem:bernstein} differs from \lemref{lem:hoeffding} in that it involves the empirical variance of random variables. While the empirical variance remains unknown a priori, it can be efficiently maintained throughout the simulation. Therefore, our first improvement implements the empirical Bernstein inequality, but still retains the Hoeffding bound to preserve theoretical accuracy. Meanwhile, if the empirical error of \(\meanretnum{i}\) provided by \lemref{lem:bernstein} falls below the threshold, we terminate the simulation of random walks starting from \(i\). Crucially, the theoretical accuracy of estimating Kemeny's constant remains unaffected by applying this adaptive strategy.

\subsubsection{Iteration over node subset}

\ifthenelse{\boolean{fullver}}{
      Despite the refinement by using \lemref{lem:bernstein}, the performance of MC algorithm remains non-optimal for large digraphs. This inefficiency stems from the simulation from every node in digraph \(\gr\) as a starting point, which is time-consuming.
}

Note that Kemeny's constant of \(\gr\) is concerned with the sum of \(\meanretnum{i}\), we attempt to estimate it by only summing a small proportion of \(\meanretnum{i}\), which significantly reduces the required number of simulated random walks.

Specifically, for an \(n\)-node digraph \(\gr=(V,E)\), a node subset \(\sqrtset\subseteq V\) of capacity \(k\ll n\) is sampled uniformly at random. To efficiently estimate Kemeny's constant, the original sum \(S=\sum_{u\in V}\meanretnum{u}\) is replaced by the partial sum \(\tilde{S}=\nicefrac{n}{k}\sum_{i\in\sqrtset}\meanretnum{i}\). As an unbiased estimator of \(S\), \(\tilde{S}\) is also suitable for estimating Kemeny's constant.

\begin{lemma}\label{lem:sqrt-sampling-unbiased}
      \(\tilde{S}\) is an unbiased estimator for \(S\).
\end{lemma}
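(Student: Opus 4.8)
The plan is to prove unbiasedness by a direct linearity-of-expectation argument, treating the empirical means \(\meanretnum{i}\) as fixed and taking the expectation only over the random draw of the subset \(\sqrtset\). First I would introduce, for each node \(i\in V\), the membership indicator \(\mathbb{1}\setof{i\in\sqrtset}\), which lets me rewrite the partial sum as a sum over all of \(V\):
\[
      \tilde{S}=\frac{n}{k}\sum_{i\in V}\meanretnum{i}\,\mathbb{1}\setof{i\in\sqrtset}.
\]
Since the subset \(\sqrtset\) is drawn independently of the random walks used to form each \(\meanretnum{i}\), conditioning on the walk realizations makes every \(\meanretnum{i}\) a constant, so by linearity the only remaining randomness is the indicator:
\[
      \mean{\tilde{S}}=\frac{n}{k}\sum_{i\in V}\meanretnum{i}\,\Pr\spar{i\in\sqrtset}.
\]

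The crux is to evaluate \(\Pr\spar{i\in\sqrtset}\). Because \(\sqrtset\) is a uniformly random \(k\)-subset of the \(n\) nodes, every node is equally likely to be included, so by symmetry \(\Pr\spar{i\in\sqrtset}=\nicefrac{k}{n}\) for every \(i\) (equivalently, counting the \(\binom{n-1}{k-1}\) subsets containing \(i\) against the \(\binom{n}{k}\) total). Substituting this back,
\[
      \mean{\tilde{S}}=\frac{n}{k}\cdot\frac{k}{n}\sum_{i\in V}\meanretnum{i}=\sum_{i\in V}\meanretnum{i}=S,
\]
which is exactly the claim. To obtain unbiasedness with respect to all randomness at once (both sampling and walks), I would finally invoke the tower property, taking a further expectation over the random walks on both sides.

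The calculation itself is routine, so there is no genuinely hard step; the only point requiring care is making explicit that the subset sampling is independent of the random-walk simulation, which is what justifies factoring the expectation, together with the fact that uniform \(k\)-subset sampling yields inclusion probability exactly \(\nicefrac{k}{n}\). If the implementation instead samples \(k\) nodes with replacement, I would note that the same conclusion follows, since each draw is uniform and linearity again assigns total inclusion weight \(\nicefrac{k}{n}\) to every node.
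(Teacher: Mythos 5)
Your proof is correct and is essentially the same argument as the paper's: the paper averages over all $\binom{n}{k}$ equally likely subsets and counts the $\binom{n-1}{k-1}$ subsets containing each node, which is exactly your inclusion-probability computation $\Pr\spar{i\in\sqrtset}=\nicefrac{k}{n}$ phrased via indicator variables and linearity of expectation. Your added remarks on conditioning on the walk realizations and the tower property are a slightly more explicit treatment of the walk randomness than the paper gives, but the core calculation is identical.
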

\ifthenelse{\boolean{fullver}}{\begin{proof}
    For a strongly connected \(n\)-node digraph \(\gr=(V,E)\), we denote the set of \(k\)-combinations of \(V\) as \(V_k\). Since the node subset \(\sqrtset\) is sampled uniformly at random, we can represent the expected value of \(\tilde{S}\) as
    \begin{align*}
        \bigmean{\tilde{S}}
         & =\sum_{\sqrtset\in V_k}\frac{1}{\binom{n}{k}}\frac{n}{k}\sum_{i\in\sqrtset}\meanretnum{i}                    \\
         & =\frac{1}{\binom{n}{k}}\frac{n}{k}\sum_{u\in V}\binom{n-1}{k-1}\meanretnum{u}=\sum_{u\in V}\meanretnum{u}=S,
    \end{align*}
    which completes our proof.
\end{proof}
}

Next, we provide a guarantee for the additive error of \(\tilde{S}\):

\begin{lemma}\label{lem:sqrt-sampling-error}
      Given \(n\) positive numbers \(x_1,x_2,\dots,x_n\in[0,b]\) with their sum \(x=\sum_{i=1}^{n}x_i\) and an error parameter \(\epsilon>n^{-\nicehalf}\log^{\nicehalf}\spar{2n}\). If we randomly select \(c=\bigbigo{b\epsilon^{-1}n^{\nicehalf}\log^{\nicehalf}n}\) numbers, \(x_{v_1},x_{v_2},\dots,x_{v_c}\) by Bernoulli trials with success probability \(p=b\epsilon^{-1}n^{-\nicehalf}\log^{\nicehalf}\spar{2n}\) satisfying \(0<p<1\), and define \(\tilde{x}=\sum_{i=1}^{c}x_{v_i}/p\), then \(\tilde{x}\) is an approximation for the sum \(x\) of the original \(n\) numbers, satisfying \(\abs{x-\tilde{x}}\leq n\epsilon\).
\end{lemma}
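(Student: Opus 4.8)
The plan is to recognize $\tilde{x}$ as a Horvitz--Thompson estimator and control its deviation from $x$ by a Hoeffding-type concentration argument. First I would introduce independent indicators $Y_1,\dots,Y_n$ with $Y_i\sim\mathrm{Bernoulli}(p)$, where $Y_i=1$ signals that $x_i$ is selected. Then the estimator can be written as $\tilde{x}=p^{-1}\sum_{i=1}^{n}Y_ix_i$, and by linearity $\mean{\tilde{x}}=p^{-1}\sum_{i=1}^{n}\mean{Y_i}x_i=\sum_{i=1}^{n}x_i=x$, so $\tilde{x}$ is unbiased. I would also record that the number of selected terms is itself $\mathrm{Binomial}(n,p)$ with mean $np=b\epsilon^{-1}n^{\nicehalf}\log^{\nicehalf}\spar{2n}=\bigo{b\epsilon^{-1}n^{\nicehalf}\log^{\nicehalf}n}$, which accounts for the stated value of $c$.

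Next I would express the deviation as a sum of independent, mean-zero terms, $\tilde{x}-x=\sum_{i=1}^{n}Z_i$ with $Z_i=p^{-1}\spar{Y_i-p}x_i$. Each $Z_i$ attains only the two values $-x_i$ and $\spar{1-p}x_i/p$, and since $0\leq x_i\leq b$ all of them lie in the common interval $\midpar{-b,\ \spar{1-p}b/p}$, whose length equals $b/p$. Applying Hoeffding's inequality (\lemref{lem:hoeffding}) with this common range yields
$\Pr\spar{\abs{\tilde{x}-x}\geq t}\leq 2\exp\setof{-2t^2p^2/\spar{nb^2}}$.

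Finally I would set $t=n\epsilon$ and substitute $p=b\epsilon^{-1}n^{-\nicehalf}\log^{\nicehalf}\spar{2n}$, so that $p^2=b^2\epsilon^{-2}n^{-1}\log\spar{2n}$ and the exponent collapses to $-2t^2p^2/\spar{nb^2}=-2\log\spar{2n}$. Hence $\Pr\spar{\abs{\tilde{x}-x}\geq n\epsilon}\leq 2\spar{2n}^{-2}=\nicefrac{1}{2n^2}$, i.e. the additive bound $\abs{x-\tilde{x}}\leq n\epsilon$ holds with high probability $1-\nicefrac{1}{2n^2}$. The hypothesis $\epsilon>n^{-\nicehalf}\log^{\nicehalf}\spar{2n}$ is precisely what keeps $p$ below $1$ (for the bounded, returning-probability quantities to which this lemma is applied), so that the Bernoulli sampling is well defined.

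The argument is essentially routine once the estimator is viewed as a Horvitz--Thompson sum; the one point to verify carefully is that the variance proxy $\sum_i\spar{b-a}^2\leq nb^2p^{-2}$, combined with the prescribed $p$, makes the Hoeffding exponent equal exactly $-2\log\spar{2n}$, which is what delivers the high-probability guarantee at the claimed sublinear sample size. A secondary subtlety worth flagging is that $c$ is itself a $\mathrm{Binomial}(n,p)$ variable, so the stated $c=\bigo{\cdot}$ should be read as its expectation $np$; if a fixed-size sample is required, a standard Chernoff bound on $c$ closes this gap. An alternative route would replace Hoeffding with the empirical Bernstein bound (\lemref{lem:bernstein}) to exploit the variance of the $x_i$, but Hoeffding already suffices here and is cleaner.
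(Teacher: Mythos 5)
Your proposal is correct and follows essentially the same route as the paper: introduce Bernoulli indicators for the selections, apply Hoeffding's inequality (\lemref{lem:hoeffding}) with the uniform range bound \(b\) (equivalently \(b/p\) after your rescaling), and substitute the prescribed \(p\) so the exponent becomes \(-2\log(2n)\), yielding the bound with probability at least \(1-\nicefrac{1}{2n^2}\) (the paper states the looser \(1-\nicefrac{1}{n}\)). Your side remarks — that \(c\) is really \(\mathrm{Binomial}(n,p)\) with mean \(np\), and that the hypothesis on \(\epsilon\) is what keeps \(p<1\) — match how the paper implicitly treats these points.
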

\ifthenelse{\boolean{fullver}}{\begin{proof}
    For \(i=1,2,\dots,n\), let \(y_i\) be Bernoulli random variable such that \(\Pr\spar{y_i=1}=p\) and \(\Pr\spar{y_i=0}=1-p\). Here \(y_i=1\) indicates that \(x_i\) is selected and \(y_i=0\) otherwise. Let \(z_i=x_iy_i\) be \(n\) independent random variables with \(0\leq z_i\leq b\). Denote the sum of random variables \(y_i\) as \(y\), and denote the sum of random variables \(z_i\) as \(z\). Namely, \(y=\sum_{i=1}^{n}y_i\) and \(z=\sum_{i=1}^{n}z_i\). It is clear that \(y\) and \(z\) represent the number of selected numbers and their sum. Therefore the expectations of \(y\) and \(z\) can be expressed as \(\mean{y}=np\) and \(\mean{z}=px\). According to \lemref{lem:hoeffding}, we have
    \begin{equation*}
        \Pr\spar{\abs{\tilde{x}-x}\geq n\epsilon}=\Pr\spar{\frac{\abs{z-px}}{p}\geq n\epsilon}\leq2\exp\setof{-\frac{2p^2n^2\epsilon^2}{nb^2}}\leq\frac{1}{n},
    \end{equation*}
    finishing the proof.
\end{proof}
}

\subsubsection{Improved Monte Carlo algorithm}

Using Lemmas~\ref{lem:bernstein} and~\ref{lem:sqrt-sampling-error}, we propose an improved MC algorithm for approximating Kemeny's constant of digraphs, which is depicted in \algoref{algo:improvedmc}.

\begin{algorithm}
    \caption{\algoname{ImprovedMC}\((\gr,\epsilon)\)}
    \label{algo:improvedmc}
    \Input{
        \(\gr=(V,E)\): a digraph with \(n\) nodes;
        \(\epsilon>0\): an error parameter
    }
    \Output{
        \(\sqrtkem\): approximation of Kemeny's constant \(\kem\) in \(\gr\)
    }
    \(l\gets\ceil{\frac{\log\spar{\nicefrac{3}{(\epsilon-\epsilon\lambda)}}}{\log\spar{\nicefrac{1}{\lambda}}}}\), \(r\gets\ceil{9\epsilon^{-2}l^2\log\spar{2n}/4}\)\;
    Sample a node set \(\sqrtset\subseteq V\) satisfying \(\abs{\sqrtset}=\min\setof{\ceil{3\epsilon^{-1}ln^{\nicehalf}\log^{\nicehalf}n/2},n}\)\;\label{line:sqrt-sample}
    \ForEach{node \(i\in \sqrtset\)}{\label{line:simwalk-begin}
    \(\meanretnum{i}\gets0\)\;
    \For{\(j=1,2,\dots,r\)}{
    Sample the \(\myord{j}\) \(l\)-truncated random walk starting from \(i\), and let \(\retnum{i}{j}\) be the times of walk that return to \(i\)\;
    \(\meanretnum{i}\gets\meanretnum{i}+\retnum{i}{j}\)\;
    \lIf{\(\bigbernfunc{j,\hat{t}^{(l)}_{\mathrm{var}},\nicefrac{l}{2},\nicefrac{1}{n}}\leq \nicefrac{n\epsilon}{3}\)}{break}\label{line:bernstein}
    }
    \(\meanretnum{i}\gets\meanretnum{i}/j\)\;\label{line:simwalk-end}
    }
    \Return{\(\sqrtkem=n-l-1+\frac{n}{\abs{\sqrtset}}\sum_{i\in\sqrtset}\meanretnum{i}\)}
\end{algorithm}

According to \lemref{lem:sqrt-sampling-error}, \algoname{ImprovedMC} randomly selects \(\abs{\sqrtset}=\bigbigo{\epsilon^{-1}ln^{\nicehalf}\log^{\nicehalf}n}\) nodes from \(V\) (\lineref{line:sqrt-sample}).
\algoname{ImprovedMC} then simulates \(\bigo{\epsilon^{-2}l^2\log n}\) \(l\)-truncated random walks from each selected node \(i\) (\linerangeref{line:simwalk-begin}{line:simwalk-end}). If the empirical error computed by \lemref{lem:bernstein} is less than threshold \(\nicefrac{\epsilon}{3}\), the simulation terminates (\lineref{line:bernstein}). By summing up \(\meanretnum{i}\), \algoname{ImprovedMC} returns \(\sqrtkem\) as the approximation for Kemeny's constant. The performance of \algoname{ImprovedMC} is characterized in \thmref{thm:performance-improvedmc}.

\begin{theorem}\label{thm:performance-improvedmc}
      For an \(n\)-node digraph \(\gr\) and an error parameter \(\epsilon>0\), \algoref{algo:improvedmc} runs in \(\bigbigo{\epsilon^{-3}l^4n^{\nicehalf}\log^{\nicefrac{3}{2}} n}\) time, and returns \(\sqrtkem\) as the approximation for Kemeny's constant \(\kem\) of \(\gr\), which is guaranteed to satisfy \(\bigabs{\sqrtkem-\kem}\leq 2\epsilon\kem\) with high probability.
\end{theorem}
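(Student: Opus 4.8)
The plan is to control $\bigabs{\sqrtkem-\kem}$ by splitting it into three error sources and then convert the resulting additive bound into the relative bound $2\epsilon\kem$ via a lower bound on $\kem$. Write the output as $\sqrtkem=n-l-1+\tilde{S}$ with $\tilde{S}=\frac{n}{\abs{\sqrtset}}\sum_{i\in\sqrtset}\meanretnum{i}$, and introduce two intermediate quantities: $\trunckem=\bigtrace{\truncfund}=n-l-1+S$ where $S=\sum_{i\in V}\mean{\meanretnum{i}}$ and $\mean{\meanretnum{i}}=\sum_{k=1}^{l}\matp^k_{i,i}$ is the true expected number of returns to $i$ in an $l$-step walk, together with $\hat{S}=\frac{n}{\abs{\sqrtset}}\sum_{i\in\sqrtset}\mean{\meanretnum{i}}$. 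Then
\begin{equation*}
\sqrtkem-\kem=\underbrace{\spar{\tilde{S}-\hat{S}}}_{\text{Monte Carlo}}+\underbrace{\spar{\hat{S}-S}}_{\text{sub-sampling}}-\underbrace{\spar{\kem-\trunckem}}_{\text{truncation}},
\end{equation*}
and the goal is to bound each piece by roughly $\nicefrac{n\epsilon}{3}$. The running time is immediate: the algorithm simulates at most $r=\bigo{\epsilon^{-2}l^2\log n}$ walks of length $l$ from each of $\abs{\sqrtset}=\bigo{\epsilon^{-1}ln^{\nicehalf}\log^{\nicehalf}n}$ sampled nodes, so the cost is $\abs{\sqrtset}\cdot r\cdot\bigo{l}=\bigbigo{\epsilon^{-3}l^4n^{\nicehalf}\log^{\nicefrac{3}{2}}n}$.

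For the truncation term I would invoke \lemref{lem:trace-truncate}: the chosen $l\geq\frac{\log\spar{\nicefrac{3}{(\epsilon-\epsilon\lambda)}}}{\log\spar{\nicefrac{1}{\lambda}}}$ forces $\lambda^{l}\leq\nicefrac{\epsilon(1-\lambda)}{3}$, whence $\abs{\kem-\trunckem}\leq\frac{(n-1)\lambda^{l+1}}{1-\lambda}\leq\nicefrac{(n-1)\epsilon}{3}$. For the sub-sampling term, the values $\mean{\meanretnum{i}}$ lie in $\midpar{0,\nicefrac{l}{2}}$ (a walk of length $l$ returns to its start at most $\nicefrac{l}{2}$ times in a loopless digraph), so \lemref{lem:sqrt-sampling-error} with error parameter $\nicefrac{\epsilon}{3}$ and the sample size of \lineref{line:sqrt-sample} gives $\bigabs{\hat{S}-S}\leq\nicefrac{n\epsilon}{3}$ with probability at least $1-\nicefrac{1}{n}$, while \lemref{lem:sqrt-sampling-unbiased} certifies unbiasedness so this is a genuine concentration statement. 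For the Monte Carlo term I would argue that the stopping rule guarantees $\bigabs{\meanretnum{i}-\mean{\meanretnum{i}}}\leq\nicefrac{\epsilon}{3}$ for every sampled $i$: either the empirical Bernstein test of \lemref{lem:bernstein} (with $\delta=\nicefrac{1}{n}$) fires, or the loop runs all $r$ iterations, in which case substituting $r$ into \lemref{lem:hoeffding} with range $\nicefrac{l}{2}$ yields $2\exp\setof{-\nicefrac{8r\epsilon^2}{9l^2}}\leq\nicefrac{1}{2n^2}$. Since $\bigabs{\tilde{S}-\hat{S}}\leq\frac{n}{\abs{\sqrtset}}\sum_{i\in\sqrtset}\bigabs{\meanretnum{i}-\mean{\meanretnum{i}}}\leq n\cdot\nicefrac{\epsilon}{3}$, the per-node guarantee aggregates to $\nicefrac{n\epsilon}{3}$, and a union bound over the $\abs{\sqrtset}=\bigo{n^{\nicehalf}}$ sampled nodes keeps the total failure probability $o(1)$.

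Summing the three bounds gives $\bigabs{\sqrtkem-\kem}\leq n\epsilon$ with high probability, so it remains to pass to the relative form. Here I would use the spectral identity $\kem=\normtrace{\fund}=\sum_{j=2}^{n}\frac{1}{1-\lambda_j}$, valid even when $\matp$ is not diagonalizable because the group inverse acts as $\frac{1}{1-\lambda_j}$ on each generalized eigenspace with $\lambda_j\neq1$. A short computation shows $\mathrm{Re}\,\frac{1}{1-\lambda_j}=\frac{1-\mathrm{Re}\,\lambda_j}{\abs{1-\lambda_j}^2}\geq\nicehalf$ whenever $\abs{\lambda_j}\leq1$, so $\kem\geq\nicefrac{(n-1)}{2}$. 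Combining, $\bigabs{\sqrtkem-\kem}\leq n\epsilon\leq2\epsilon\kem$.

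The step I expect to be most delicate is the Monte Carlo guarantee, specifically the \emph{adaptive} stopping. Because the empirical Bernstein test is evaluated at a data-dependent stopping time, a single application of \lemref{lem:bernstein} does not literally certify the error at the random iteration where the loop breaks; making this airtight needs either an anytime/uniform confidence bound or the device—implicit in retaining the worst-case count $r$—of anchoring accuracy on the Hoeffding bound at $r$ samples and treating the Bernstein test purely as an early exit that never weakens it. The other mildly subtle point is bookkeeping the constants so that the three $\nicefrac{n\epsilon}{3}$-type terms and the bound $\kem\geq\nicefrac{(n-1)}{2}$ combine into exactly $2\epsilon\kem$ rather than a slightly larger multiple of $\epsilon\kem$.
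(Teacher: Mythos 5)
Your proof is essentially the paper's own argument: the same three error sources (truncation controlled by \lemref{lem:trace-truncate} with the algorithm's choice of $l$, per-node Monte Carlo error controlled by \lemref{lem:hoeffding} with range $\nicefrac{l}{2}$ plus a union bound, and sub-sampling error controlled by \lemref{lem:sqrt-sampling-error}), each held to roughly $\nicefrac{n\epsilon}{3}$, then an additive-to-relative conversion using a lower bound on $\kem$. The only structural difference is the order of the decomposition: you subsample the exact values $\mathbb{E}\bigl[\meanretnum{i}\bigr]$ and control Monte Carlo error only on the sampled nodes, whereas the paper controls the Monte Carlo error of the hypothetical full-sum estimator $\approxkem$ over all $n$ nodes and then subsamples the empirical means. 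Both are valid and give the same bounds; yours is arguably slightly cleaner since \lemref{lem:sqrt-sampling-error} is then applied to deterministic numbers. Your caveat about the adaptive Bernstein stopping also matches the paper's (equally brief) treatment, which anchors correctness on the worst-case Hoeffding count $r$ and treats the Bernstein test purely as an early exit.

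The one genuine gap is your final step. Your spectral argument yields $\kem=\sum_{j=2}^{n}\frac{1}{1-\lambda_j}\geq\nicefrac{(n-1)}{2}$, but the concluding inequality $n\epsilon\leq 2\epsilon\kem$ requires $\kem\geq\nicefrac{n}{2}$, and $\nicefrac{(n-1)}{2}<\nicefrac{n}{2}$: as written you only obtain $\bigabs{\sqrtkem-\kem}\leq\frac{2n}{n-1}\epsilon\kem$, which exceeds the claimed $2\epsilon\kem$. Shaving the truncation term to $\nicefrac{(n-1)\epsilon}{3}$ does not rescue this, since the other two terms each contribute $\nicefrac{n\epsilon}{3}$, so you need $\kem\geq\nicefrac{(3n-1)}{6}$ while your bound gives only $\nicefrac{(3n-3)}{6}$. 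The paper closes exactly this gap by citing Hunter's result~\cite{Hu06} that the minimum Kemeny constant over all $n$-node digraphs is $\nicefrac{(n+1)}{2}$, which is strictly larger than $\nicefrac{n}{2}$; the termwise estimate $\mathrm{Re}\,\frac{1}{1-\lambda_j}\geq\nicehalf$ alone cannot reach that constant. So either invoke that result or accept a marginally weaker relative-error constant.
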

\ifthenelse{\boolean{fullver}}{\begin{proof}
    Deriving the time complexity of \algoname{ImprovedMC} is straightforward. Therefore, our main focus is to provide the relative error guarantee for this algorithm.

    Referring back to~\eqref{eq:def-trunckem}, \(\kem\) is initally approximated by the truncated sum \(\trunckem\). The approximation error of \(\trunckem\) dependent on truncated length \(l\) is analyzed in \lemref{lem:trace-truncate}.
    Furthermore, the estimator \(\approxkem=n-l-1+\sum_{i=1}^{n}\meanretnum{i}\) is leveraged to approximate \(\trunckem\).
    We next provide the connection between \(r\) and the error of \(\approxkem\). According to \lemref{lem:hoeffding}, we have
    \begin{align*}
        \Pr\spar{\abs{\meanretnum{i}-\bigmean{\meanretnum{i}}}\geq\frac{\epsilon}{3}}
         & = \Pr\biggspar{\biggabs{\sum_{j=1}^{r}\retnum{i}{j}-\bigmean{\sum_{j=1}^{r}\retnum{i}{j}}}\geq\frac{r\epsilon}{3}} \\
         & \leq 2\exp\setof{-\frac{2r^2\epsilon^2}{9r\spar{\nicefrac{l}{2}}^2}}\leq\frac{1}{2n^2}.
    \end{align*}
    Based on the union bound, it holds that
    \begin{equation}\label{eq:err-approxkem}
        \abs{\approxkem-\trunckem}\leq\sum_{i=1}^{n}\abs{\meanretnum{i}-\sum_{k=1}^{l}\matp^k_{[i,i]}}\leq\frac{n\epsilon}{3}
    \end{equation}
    with probability
    \begin{equation*}
        \biggspar{1-\frac{1}{n^2}}^n\geq1-n\cdot\frac{1}{2n^2}=1-\frac{1}{2n}.
    \end{equation*}
    As stated in \algoref{algo:improvedmc}, applying \lemref{lem:bernstein} does not introduce additional error since the error of \(\approxkem\) is lower than \(\nicefrac{n\epsilon}{3}\). Therefore, we utilize \lemref{lem:sqrt-sampling-error} to give a theoretical bound for the approximation error of the partial sum \(\sqrtkem\) as
    \begin{equation}\label{eq:err-sqrtkem}
        \Pr\bigspar{\bigabs{\sqrtkem-\approxkem}\geq\frac{n\epsilon}{3}}\leq2\exp\setof{-\frac{2\abs{\sqrtset}\epsilon^2}{n\spar{\nicefrac{l}{2}}^2}}\leq\frac{1}{2n}.
    \end{equation}
    Plugging~\eqref{eq:err-approxkem} and~\eqref{eq:err-sqrtkem} into \lemref{lem:trace-truncate}, we derive the additive error guarantee of \(\sqrtkem\):
    \begin{equation*}
        \Pr\bigspar{\bigabs{\sqrtkem-\kem}\leq n\epsilon}\geq\bigspar{1-\frac{1}{2n}}^2\geq1-\frac{1}{n}.
    \end{equation*}
    As shown by~\cite{Hu06}, the minimum Kemeny's constant across all \(n\)-node digraphs is \(\nicefrac{(n+1)}{2}\). Leveraging this result, we can translate the additive error bound for \(\sqrtkem\) into a relative error guarantee:
    \begin{equation*}
        \Pr\bigspar{\bigabs{\sqrtkem-\kem}\leq 2\epsilon\kem}\geq\Pr\bigspar{\bigabs{\sqrtkem-\kem}\leq n\epsilon}\geq1-\frac{1}{n},
    \end{equation*}
    which completes our proof.
\end{proof}
}

Although the time complexity of \algoname{ImprovedMC} is sublinear with respect to the number of nodes, this complexity bound remains relatively loose due to the inclusion of \lemref{lem:bernstein}.

\subsection{Algorithm Based on Directed Tree Sampling}\label{subsec:alg-tree}

Although \algoname{ImprovedMC} achieves enhanced efficiency through optimization techniques, its accuracy remains to be improved. Leveraging the alternative formula in \thmref{thm:kem-submat}, we propose another MC algorithm \algoname{TreeMC}, which samples incoming directed rooted spanning trees. Due to the improved diagonal dominance of the submatrix in~\eqref{eq:kem-submat}, \algoname{TreeMC} attains enhanced accuracy. After presenting \algoname{TreeMC}, we analyze its time complexity and error guarantee.

Recall from \thmref{thm:kem-submat} that the calculation of Kemeny's constant is reduced to the evaluation of \(\bigtrace{\subfund}\) and \(\absorb{s}\). Meanwhile, \lemref{lem:diag-truncate} and~\eqref{eq:walk-centr} indicate that \(\absorb{s}\) can be estimated by simulating truncated random walks. Therefore, the main goal of our MC algorithm is to approximate \(\bigtrace{\subfund}\).

For random walks in digraph \(\gr=(V,E)\) with absorbing node \(s\), \(\subfund_{i,j}\) denotes the expected passage times over node \(j\) by a random walker initialized at node \(i\) prior to absorption at node \(s\)~\cite{ZhYaLi12}. Using this physical meaning, we can estimate \(\bigtrace{\subfund}\) by simulating absorbing random walks from each node in \(V\setminus\setof{s}\).
However, the expected running time of single simulation is \(\sum_{i=1}^{n}\sum_{j=1}^{n}\subfund_{i,j}=\vecone^\top\subfund\vecone\), which is large due to the dense property of \(\subfund\). To improve the efficiency of simulation, we introduce the method of sampling incoming directed rooted spanning trees, which is essentially simulating loop-erased random walks.

Starting from a node, the loop-erased random walk consists of two phases: simulation of a random walk as well as the erasure of the loop within walk path. The loop-erased random walk is used in Wilson's algorithm~\cite{Wi96}, where loop-erased paths are iteratively generated. The aggregate of these loop-erased paths is a directed, cycle-free subgraph of \(\gr\). In this subgraph, every non-root node \(i\in V\setminus\setof{s}\) has out-degree \(1\), while the root node \(s\) has out-degree \(0\). This subgraph is denoted as an incoming directed spanning tree rooted at \(s\). Based on a variant of Wilson's algorithm, we prove that the diagonal elements of \(\subfund\) can be well approximated.

\begin{lemma}\label{lem:tree-sampling-unbiased}
      For a digraph \(\gr=(V,E)\) with transition matrix \(\matp\), we simulate loop-erased random walks with an absorbing node \(s\). Let \(t_{i}\) denote the passage times over node \(i\), then we have
      \begin{equation}\label{eq:tree-sampling-unbiased}
            \mean{t_i}=\subfund_{i,i}.
      \end{equation}
\end{lemma}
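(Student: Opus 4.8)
The plan is to interpret $t_i$ as the number of \emph{passages} over node $i$, i.e.\ the number of times the underlying (pre-erasure) random walk departs from $i$ over the entire execution of the tree-sampling procedure, and then to collapse this expectation to that of a single absorbing walk. Recall the physical meaning of $\subfund$ stated just above~\cite{ZhYaLi12}: for a plain random walk started at $i$ and absorbed at $s$, the expected number of passage times over $i$ before absorption equals $\subfund_{i,i}$. The goal is therefore to reduce the multi-segment loop-erased process to this single-walk quantity, and the cited fact then finishes the computation.

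First I would establish that $\mean{t_i}$ does not depend on the order in which source nodes are chosen. This follows from the cycle-popping formulation of Wilson's algorithm~\cite{Wi96}: attach to each non-root node $v\in V\setminus\setof{s}$ an i.i.d.\ stack of arrows drawn according to $\matp_{v,\cdot}$, and note that each departure from $v$ consumes exactly one arrow from $v$'s stack, so a passage over $v$ is precisely an arrow advancement at $v$. By the cycle-popping theorem the entire multiset of popped cycles is a function of the stacks alone; since the arrows consumed at $v$ equal the single final tree arrow plus one arrow per popped cycle through $v$, the count $t_i$ is itself a function of the stacks and hence invariant under the processing order. Consequently $\mean{t_i}$ may be computed with whichever order is most convenient.

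Next I would select the order that processes $i$ first. At that moment the partially built tree is just $\setof{s}$, so the first loop-erased walk is a plain random walk from $i$ run until it is absorbed at $s$; by the fact recalled above its expected number of passage times over $i$ is $\subfund_{i,i}$. After this walk the loop-erased path from $i$ to $s$ is inserted, so $i$ now belongs to the tree. Every subsequent walk halts the instant it first reaches the current tree, so if such a walk ever arrives at $i$ it stops there and never departs from $i$. Hence no later segment contributes a passage over $i$, all passages over $i$ originate from the first walk, and combining this with order-independence yields $\mean{t_i}=\subfund_{i,i}$.

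I expect the main obstacle to be making the two structural reductions fully rigorous, rather than the final computation. The first is justifying order-independence of the passage counts, which is \emph{stronger} than the usual statement of Wilson's theorem (that the sampled tree has the right law): it requires the stack/cycle-popping fact that the whole collection of popped cycles is determined by the arrow stacks. The second is the careful bookkeeping distinguishing a \emph{passage} over $i$ (a departure, which consumes an arrow) from a \emph{terminal visit} that merely lands on $i$ already in the tree and stops; only the former are counted by $t_i$, and it is exactly this distinction that prevents later walks from inflating the count. Checking the accounting on small instances, such as short directed paths augmented with back-edges, is a useful sanity check before committing to the general argument.
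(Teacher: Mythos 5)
Your proof is correct and takes essentially the same route as the paper's: both reduce, via the order-independence of Wilson's algorithm, to the ordering in which \(i\) is processed first, and then invoke the fact that a single absorbing random walk from \(i\) to \(s\) has expected passage count \(\subfund_{i,i}\) over \(i\). Your write-up is simply more careful than the paper's terse version, making explicit the cycle-popping justification of order-independence and the bookkeeping fact that once \(i\) is in the tree, later walks halt at \(i\) without contributing further passages.
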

\ifthenelse{\boolean{fullver}}{\begin{proof}
    In the initial round of the loop-erased random walk that starts from node \(i\), there is only one absorbing node \(s\) and the expected passage times over \(i\) is \(\subfund_{i,i}\). Hence~\eqref{eq:tree-sampling-unbiased} holds true for the initial starting node. For Wilson's algorithm, the distribution of sampled random walk path is independent of the node ordering~\cite{Wi96}. Therefore, every node in \(V\setminus\setof{s}\) can be sampled in the first round, which indicates that~\eqref{eq:tree-sampling-unbiased} holds true for all \(i\in V\setminus\setof{s}\).
\end{proof}
}

\lemref{lem:tree-sampling-unbiased} reveals that estimating \(\bigtrace{\subfund}\) via a single sampled spanning tree is equivalent to simulating \(n\) absorbing random walks. However, spanning tree sampling proves substantially more efficient than absorbing walk simulation. This efficiency gain arises because loop-erased walks terminate upon revisiting prior loop-free trajectories. Consequently, we estimate \(\bigtrace{\subfund}\) through sampling spanning trees rather than absorbing walk simulations.

Specifically, we sample \(r\) incoming directed rooted spanning trees by performing loop-erased random walks. Let \(\passnum{i}{j}\) denote the passage times over \(i\) for the \(\myord{j}\) sample, and \(\meanpassnum{i}\) denote the empirical mean over \(j\). Then \(\sum_{i\in V\setminus\setof{s}}\meanpassnum{i}\) is an unbiased estimator of \(\bigtrace{\subfund}\).
We continue to provide an error bound for the sample size \(r\). To this end, we bound the passage times \(\passnum{i}{j}\) with high probability.

\begin{lemma}\label{lem:passnum-upperbound}
      Given an \(n\)-node digraph \(\gr=(V,E)\) and an absorbing node \(s\in V\), let \(\dmax\) be the maximum out-degree, and let \(\tau\) be the diameter of \(\gr\), which is the longest distance between all pairs of nodes. If \(t\) is selected satisfying \(t\geq\rme\tau \dmax^\tau\ceil{\log\spar{4n^2}}/2\), then
      \begin{equation*}
            \Pr\spar{\passnum{i}{j}>t}\leq\frac{1}{4n^2}.
      \end{equation*}
\end{lemma}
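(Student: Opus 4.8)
The plan is to read $\passnum{i}{j}$ as the number of visits that an absorbing random walk, started at $i$ with $s$ as the single absorbing node, makes to $i$ before being trapped at $s$, and then to control this count by controlling how long the walk can survive. The first step I would take is to record a uniform, geometry-based lower bound on the absorption probability over any window of $\tau$ steps. Since $\tau$ is the diameter, every node admits a directed path to $s$ of length at most $\tau$, and following one such path costs probability at least $\dmax^{-\tau}$; equivalently $\matp^{\tau}_{i,j}\ge\dmax^{-\tau}$ for all $i,j$, exactly the fact already used for \lemref{lem:diag-truncate}. Hence, from any current position, the walk reaches and is absorbed at $s$ within $\tau$ steps with probability at least $q\defeq\dmax^{-\tau}$. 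This is the one genuinely graph-theoretic ingredient, and it is already in hand.

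Next I would convert this into a tail bound on the survival time. I would partition the trajectory into consecutive blocks of $\tau$ steps; by the strong Markov property, conditioned on not yet being absorbed at the start of a block, the walk is absorbed during that block with probability at least $q$, so the number of blocks survived is stochastically dominated by a geometric variable with parameter $q$. Because the walk visits $i$ at most once per step, $\passnum{i}{j}$ is at most the number of steps before absorption, which in turn is at most $\tau$ times the number of surviving blocks, giving
\begin{equation*}
      \Pr\spar{\passnum{i}{j}>t}\le\spar{1-q}^{\floor{t/\tau}}\le\exp\spar{-q\floor{t/\tau}}.
\end{equation*}
Taking $t\ge\tfrac{\rme}{2}\tau\dmax^{\tau}\ceil{\log\spar{4n^2}}$ makes $q\floor{t/\tau}$ exceed $\log\spar{4n^2}$ and therefore drives the right-hand side below $\nicefrac{1}{4n^2}$, which is the claim.

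The conceptual core — the per-window absorption bound together with the geometric domination — is routine; the main obstacle is the constant bookkeeping. The factor $\nicefrac{\rme}{2}$ and the ceiling are exactly the slack needed to absorb the gap between $\floor{t/\tau}$ and $t/\tau$ and the rounding of $\log\spar{4n^2}$, so the delicate part is verifying that the chosen threshold clears $\log\spar{4n^2}$ after these roundings. I would also be careful about the conditioning: the blocks are not independent, so the geometric domination must be argued through the tower property rather than by treating the per-block absorption events as i.i.d., and I must keep $\passnum{i}{j}$ bounded by the number of \emph{steps} (not steps per block). An alternative, slightly tighter route removes the $\tau$ entirely: a shortest simple path from $i$ to $s$ never revisits $i$, so the single-excursion return probability obeys $p_i\le 1-\dmax^{-\tau}$, making $\passnum{i}{j}$ exactly geometric with $\Pr\spar{\passnum{i}{j}>t}=p_i^{\,t}$; but since the statement carries the extra $\tau$ factor, I would present the block-based argument to match it.
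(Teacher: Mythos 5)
Your argument is correct, but it reaches the bound by a genuinely different route than the paper. The paper first converts visit counts into hitting times via $\passnum{i}{j}\le\randhit{i}{s}/2$ (each return to $i$ costs at least two steps), then bounds the worst-case \emph{expected} hitting time, $\max_{i}\hit{i}{s}=\Abs{\subfund\vecone}_\infty\le\sum_{k=0}^{\infty}\spar{1-\dmax^{-\tau}}^{\floor{\nicefrac{k}{\tau}}}=\tau\dmax^{\tau}$, and finally invokes the general sub-exponential tail inequality for hitting times from Aldous--Fill, $\Pr\spar{\randhit{i}{s}>2t}\le\exp\setof{-\floor{2t/(\rme\hit{i}{s})}}$. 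You bypass both of these steps: you use the same geometric ingredient (absorption within any $\tau$-step window with probability at least $\dmax^{-\tau}$) directly as a survival-time tail bound, via block decomposition and the Markov property, giving $\Pr\spar{\passnum{i}{j}>t}\le\spar{1-\dmax^{-\tau}}^{\floor{t/\tau}}$. Your route is more self-contained --- no expected-hitting-time computation and no external tail inequality (the Aldous--Fill bound is itself proved by a block argument, at scale $\rme\,\mathbb{E}[T]$ rather than your scale $\tau$) --- and your exponent $\dmax^{-\tau}\floor{t/\tau}$ is in fact slightly stronger than the paper's $\floor{2t/(\rme\tau\dmax^{\tau})}$. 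What the paper's route buys is an explanation of the otherwise arbitrary-looking factor $\rme/2$ in the threshold (it is exactly what the Aldous--Fill inequality plus the factor-two saving on returns require), whereas in your argument that factor is pure slack; the bookkeeping you flag does close, since for $n\ge2$, writing $x=\log\spar{4n^2}$, one has $\dmax^{-\tau}\floor{t/\tau}\ge\spar{\nicefrac{\rme}{2}}\ceil{x}-1\ge x$, as required. Your concluding single-excursion remark (the escape probability from $i$ to $s$ is at least $\dmax^{-\tau}$ because a shortest path never revisits $i$, so the visit count is exactly geometric) is also valid and would give an even cleaner, tighter proof than either block argument.
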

\ifthenelse{\boolean{fullver}}{\begin{proof}
    Since \(\passnum{i}{j}\) denotes the passage times of node \(i\) for the \(\myord{j}\) loop-erased random walk, it is easy to verify that \(\passnum{i}{j}\leq\randhit{i}{s}/2\), where \(\randhit{i}{s}\) denotes the hitting time from \(i\) to \(s\) for the \(\myord{j}\) loop-erased random walk. Therefore, we turn to bound \(\randhit{i}{s}\) with high probability, which requires us to provide an upper bound for its expected value \(\hit{i}{s}\).

    Recall that \(\hit{i}{s}\) can be expressed as \(\vece_i^\top\subfund\vecone\), we have
    \begin{equation}\label{eq:hitting-upperbound}
        \begin{split}
            \max_{i\in V\setminus\setof{s}}\hit{i}{s}
             & =\Abs{\subfund\vecone}_\infty=\sum_{k=0}^{\infty}\Abs{\matp^k_{-s}}_\infty                                                        \\
             & \leq\sum_{k=0}^{\infty}\spar{1-\dmax^{-\tau}}^{\floor{\nicefrac{k}{\tau}}}=\frac{\tau}{1-\spar{1-\dmax^{-\tau}}}=\tau \dmax^\tau.
        \end{split}
    \end{equation}
    According to~\cite{AlFi02}, we can finish our proof by providing the upper bound for \(\randhit{i}{s}\) with high probability:
    \begin{align*}
        \Pr\spar{\passnum{i}{j}>t}
         & =\Pr\spar{\randhit{i}{s}>2t}\leq\exp\setof{-\floor{\frac{2t}{\rme\hit{i}{s}}}} \\
         & =\exp\setof{-\floor{\frac{2t}{\rme\tau \dmax^\tau}}}\leq\frac{1}{4n^2},
    \end{align*}
    where the last inequality is due to \(t\geq\rme\tau \dmax^\tau\ceil{\log\spar{4n^2}}/2\).
\end{proof}
}

Finally, we introduce a theoretical bound for the sample size \(r\).

\begin{lemma}\label{lem:trace-approx}
      Given an \(n\)-node digraph \(\gr=(V,E)\) with absorbing node \(s\), if we sample \(r\) incoming directed rooted spanning trees satisfying \(r\geq\epsilon^{-2}\rme^2\tau^2 \dmax^{2\tau}\ceil{\log^3\spar{4n^2}}\), then for any \(\epsilon>0\), we have
      \begin{equation*}
            \Pr\biggspar{\biggabs{\bigtrace{\subfund}-\sum_{i\in V\setminus\setof{s}}\meanpassnum{i}}\geq\frac{n\epsilon}{2}}\leq\frac{1}{2n}.
      \end{equation*}
\end{lemma}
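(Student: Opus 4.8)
The plan is to assemble three facts that are already in place: the unbiasedness of the estimator (\lemref{lem:tree-sampling-unbiased}), the high-probability upper bound on each passage time (\lemref{lem:passnum-upperbound}), and Hoeffding's inequality (\lemref{lem:hoeffding}). First I would observe that the $r$ sampled spanning trees are i.i.d., so each $\meanpassnum{i}$ is the average of $r$ independent copies of the passage-time variable $\passnum{i}{j}$, and \lemref{lem:tree-sampling-unbiased} gives $\mean{\passnum{i}{j}}=\subfund_{i,i}$. Consequently $\sum_{i\in V\setminus\setof{s}}\meanpassnum{i}$ is unbiased for $\bigtrace{\subfund}$, and the task reduces to showing that this sum of $n-1$ empirical means deviates from its expectation by at most $\nicefrac{n\epsilon}{2}$ with probability at least $1-\nicefrac{1}{2n}$.

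Next I would make the summands bounded so that \lemref{lem:hoeffding} applies. With the threshold $t=\rme\tau\dmax^\tau\ceil{\log\spar{4n^2}}/2$, \lemref{lem:passnum-upperbound} gives $\Pr\spar{\passnum{i}{j}>t}\le\nicefrac{1}{4n^2}$, so each passage time lies in $[0,t]$ with high probability. Conditioning on this, I would apply \lemref{lem:hoeffding} to each node separately, with range $[0,t]$ and target deviation $\nicefrac{\epsilon}{2}$; the stated choice $r\ge\epsilon^{-2}\rme^2\tau^2\dmax^{2\tau}\ceil{\log^3\spar{4n^2}}$ is engineered precisely so that the exponent $r\epsilon^2/(2t^2)$ is of order $\log\spar{4n^2}$ --- the cubic log in $r$ against the quadratic $t^2$ leaves one spare logarithmic factor. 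This makes the per-node failure probability a high power of $\nicefrac{1}{n}$, so that after summing the $n-1$ per-node deviations to get total error at most $\nicefrac{(n-1)\epsilon}{2}\le\nicefrac{n\epsilon}{2}$ and taking a union bound over nodes, the concentration failure stays well under $\nicefrac{1}{2n}$.

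The delicate step, and the one I expect to be the main obstacle, is that \lemref{lem:passnum-upperbound} bounds each passage time only with probability $\nicefrac{1}{4n^2}$ rather than deterministically, whereas Hoeffding needs a genuine bound. A naive union bound over all $\spar{n-1}r$ sampled passage times introduces a factor of $r$ that is far too large to absorb into $\nicefrac{1}{2n}$. I would resolve this by truncation: replace $\passnum{i}{j}$ with $\min\setof{\passnum{i}{j},t}$, which is bounded unconditionally, apply Hoeffding to the truncated averages, and then control both the deterministic bias $\bigtrace{\subfund}-\sum_i\mean{\min\setof{\passnum{i}{j},t}}$ and the random gap between the truncated and untruncated empirical sums. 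Both are governed by the expected excess of $\passnum{i}{j}$ above $t$, which is small because the tail estimate underlying \lemref{lem:passnum-upperbound} decays geometrically (through the exponential bound on the loop-erased hitting time $\randhit{i}{s}$ in its proof). Allocating part of the error budget $\nicefrac{n\epsilon}{2}$ to this excess --- bounded in expectation and then in probability via Markov's inequality --- and the remainder to the Hoeffding fluctuation, while splitting the failure probability $\nicefrac{1}{2n}$ between the two, should close the argument. The remaining work is the routine verification that the prescribed $t$ and $r$ keep each piece inside its allotted budget.
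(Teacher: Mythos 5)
Your proposal follows the same skeleton as the paper's proof: unbiasedness of the estimator from \lemref{lem:tree-sampling-unbiased}, the threshold \(t=\rme\tau\dmax^{\tau}\ceil{\log\spar{4n^2}}/2\) from \lemref{lem:passnum-upperbound}, a per-node application of \lemref{lem:hoeffding} with range \([0,t]\) and deviation \(\nicefrac{\epsilon}{2}\) (so that the exponent \(r\epsilon^2/(2t^2)\) is driven by the spare logarithmic factor in \(r\)), and finally a union bound over the \(n-1\) nodes with per-node failure probability \(\nicefrac{1}{2n^2}\). In that sense you have reconstructed the intended argument exactly.

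The interesting difference lies in the step you flagged as delicate. The paper handles it far more casually than you do: it bounds the per-node failure probability by
\begin{equation*}
      1-\spar{1-2\exp\setof{-\frac{2r^2\epsilon^2}{4rt^2}}}\spar{1-\frac{1}{4n^2}},
\end{equation*}
i.e.\ it charges the boundedness failure exactly once per node at \(\nicefrac{1}{4n^2}\), with no union bound over the \(r\) samples of \(\passnum{i}{j}\) drawn for that node (which, as you note, would cost a factor of \(r\) and overwhelm the budget), and with no accounting for the fact that conditioning on the event \(\setof{\passnum{i}{j}\leq t}\) perturbs both the independence structure and the means to which Hoeffding is applied. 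Your truncation repair --- working with \(\min\setof{\passnum{i}{j},t}\), which is bounded unconditionally, applying \lemref{lem:hoeffding} to the truncated averages, and then charging the truncation bias \(\bigtrace{\subfund}-\sum_{i}\mean{\min\setof{\passnum{i}{j},t}}\) and the truncated-versus-untruncated gap to the geometric tail of the hitting time underlying \lemref{lem:passnum-upperbound} --- is the standard rigorous way to close exactly this hole. So your version is not merely equivalent: it identifies and fixes a looseness that the paper's own proof leaves unaddressed, at the cost of a slightly longer argument and a (benign) reallocation of the error and failure-probability budgets.
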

\ifthenelse{\boolean{fullver}}{\begin{proof}
    \lemref{lem:passnum-upperbound} reveals that \(\passnum{i}{j}\) exhibits an explicit upper bound \(t=\rme\tau \dmax^\tau\ceil{\log\spar{4n^2}}/2\) with a high probability. Plugging this into \lemref{lem:hoeffding}, we obtain
    \begin{align*}
             & \Pr\spar{\abs{\meanpassnum{i}-\mean{\meanpassnum{i}}}\geq\frac{\epsilon}{2}}                                       \\
        =    & \Pr\biggspar{\biggabs{\sum_{j=1}^{r}\passnum{i}{j}-\bigmean{\sum_{j=1}^{r}\passnum{i}{j}}}\geq\frac{r\epsilon}{2}} \\
        \leq & 1-\spar{1-2\exp\setof{-\frac{2r^2\epsilon^2}{4rt^2}}}\spar{1-\frac{1}{4n^2}}                                       \\
        \leq & 1-\spar{1-\frac{1}{4n^2}}^2\leq1-\spar{1-\frac{2}{4n^2}}=\frac{1}{2n^2},
    \end{align*}
    where the second inequality follows from \(r\geq4\epsilon^{-2}t^2\log\spar{2n^2}\). Based on the union bound, it holds that
    \begin{align*}
             & \biggabs{\bigtrace{\subfund}-\sum_{i\in V\setminus\setof{s}}\meanpassnum{i}}               \\
        \leq & \sum_{i\in V\setminus\setof{s}}\abs{\subfund_{i,i}-\meanpassnum{i}}\leq\frac{n\epsilon}{2}
    \end{align*}
    with probability
    \begin{equation*}
        \spar{1-\frac{1}{2n^2}}^n\geq1-n\cdot\frac{1}{2n^2}=1-\frac{1}{2n}.
    \end{equation*}
    This finishes the proof.
\end{proof}
}

Based on the above analyses, we propose a more accurate MC algorithm \algoname{TreeMC} for estimating Kemeny's constant, which is depicted in \algoref{algo:treemc}. Given an \(n\)-node digraph \(\gr\) and an error parameter \(\epsilon\), By selecting absorbing node \(s\) as the node with the largest \(\pi_s\), \algoname{TreeMC} first estimates \(\bigtrace{\subfund}\) (\linerangeref{line:lewalk-begin}{line:lewalk-end}). For this purpose, \algoname{TreeMC} samples \(\bigbigo{\epsilon^{-2}\tau^2\dmax^{2\tau}\log^3n}\) incoming directed rooted spanning trees. The sampling procedure consists of the random walk part (\linerangeref{line:lewalk-begin}{line:lewalk-end}) and the loop-erasure part (\linerangeref{line:loop-erase-begin}{line:loop-erase-end}). Then, \algoname{TreeMC} estimates \(\absorb{s}\) (\linerangeref{line:approx-diag-begin}{line:approx-diag-end}). Analogous to \algoref{algo:improvedmc}, \algoname{TreeMC} also simulates \(l\)-truncated random walks and introduces \lemref{lem:bernstein} for optimization. Combining these two estimations, \algoname{TreeMC} returns \(\treekem\) as the approximation for Kemeny's constant. The performance of \algoname{TreeMC} is analyzed in \thmref{thm:performance-treemc}. Again, the additive error guarantee of \(n\epsilon\) can be converted to the relative error guarantee of \(2\epsilon\) due to the minimum Kemeny's constant of \(n\)-node digraph being \(\nicefrac{(n+1)}{2}\)~\cite{Hu06}.

\begin{algorithm}
    \caption{\algoname{TreeMC}\((\gr,\epsilon)\)}
    \label{algo:treemc}
    \Input{
        \(\gr=(V,E)\): a digraph with \(n\) nodes;
        \(\epsilon\): an error parameter
    }
    \Output{
        \(\treekem\): approximation of Kemeny's constant \(\kem\) in \(\gr\)
    }
    \(s\gets\argmax_{i\in V}\pi_i\), \(r\gets\ceil{\epsilon^{-2}\rme^2\tau^2 \dmax^{2\tau}\ceil{\log^3\spar{4n^2}}}\)\;\label{line:approx-trace-begin}
    \For{\(j=1,2,\dots,r\)}{
        \(\passnum{i}{j}\gets0\) for \(i\in V\setminus\setof{s}\)\;
        \(\text{next}_i\gets\text{arbitrary value}\) for \(i\in V\setminus\setof{s}\)\;
        \(\text{InTree}_i\gets\text{false}\) for \(i\in V\setminus\setof{s}\)\;
        \(\text{InTree}_s\gets\text{true}\)\;
        \ForEach{\(u\in V\setminus\setof{s}\)}{
            \(i\gets u\)\;\label{line:lewalk-begin}
            \While{\(\text{InTree}_i=\text{false}\)}{
                \(\passnum{i}{j}\gets\passnum{i}{j}+1\)\;
                \(\text{next}_i\gets\) a randomly selected out-neighbor of \(i\)\;
                \(i\gets\text{next}_i\)\;\label{line:lewalk-end}
            }
            \(i\gets u\)\;\label{line:loop-erase-begin}
            \While{\(\text{InTree}_i=\text{false}\)}{
                \(\text{InTree}_i\gets\text{true}\), \(i\gets\text{next}_i\)\;\label{line:loop-erase-end}
            }
        }
    }
    \(\meanpassnum{i}\gets\sum_{j=1}^{r}\passnum{i}{j}/r\) for \(i\in V\setminus\setof{s}\)\;\label{line:approx-trace-end}

    \(l\gets\ceil{\frac{\tau\log\spar{n\epsilon(2\tau)^{-1}\dmax^{-\tau}}}{\log\spar{1-n\dmax^{-\tau}}}}+\tau-1\), \(r'\gets\ceil{\frac{l^2\log n}{2\epsilon^2\pi_s^2n^2}}\),\(\meanretnum{s}\gets0\)\;\label{line:approx-diag-begin}
    \For{\(j=1,2,\dots,r'\)}{
    Sample the \(\myord{j}\) \(l\)-truncated random walk starting from \(s\), and let \(\retnum{s}{j}\) be the times of walk returning to \(s\)\;
    \(\meanretnum{s}\gets\meanretnum{s}+\retnum{s}{j}\)\;
    \lIf{\(\bigbernfunc{j,\hat{t}^{(l)}_{\mathrm{var}},\nicefrac{l}{2},\nicefrac{1}{n}}\leq \nicefrac{\sqrt{n}\epsilon}{2}\)}{break}
    }
    \(\meanretnum{s}\gets\meanretnum{s}/j\)\;\label{line:approx-diag-end}
    \Return{\(\treekem=-\pi_s^{-1}\bigspar{\meanretnum{s}+1}+\sum_{i\in V\setminus\setof{s}}\meanpassnum{i}\)}
\end{algorithm}

\begin{theorem}\label{thm:performance-treemc}
      For an \(n\)-node digraph \(\gr=(V,E)\) with absorbing node \(s\) and transition matrix \(\matp\), let \(\tau\) denote the diameter of \(\gr\) and let \(\dmax\) denote the maximum out-degree of nodes in \(\gr\). If the error parameter \(\epsilon\) is determined, then the time complexity of \algoref{algo:treemc} is \(\bigo{T}\), where
      \begin{equation*}
            T=\epsilon^{-2}\tau^2\dmax^{2\tau}\log^3n\cdot\bigtrace{\subfund}+\frac{l^3\log n}{2\epsilon^2\pi_s^2n^2}.
      \end{equation*}
      \algoref{algo:treemc} returns \(\treekem\) as the approximation for Kemeny's constant \(\kem\) of \(\gr\), which satisfies \(\abs{\treekem-\kem}\leq 2\epsilon\kem\).
\end{theorem}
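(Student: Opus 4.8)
The plan is to exploit the alternative formula of \thmref{thm:kem-submat}, namely $\kem=\bigtrace{\subfund}-\absorb{s}$, and to control the errors of the two estimates that \algoref{algo:treemc} produces for the two summands separately. The first summand $\bigtrace{\subfund}$ is estimated by $\sum_{i\in V\setminus\setof{s}}\meanpassnum{i}$, obtained from $r$ sampled incoming directed spanning trees; the second summand $\absorb{s}$ is estimated from the $l$-truncated walks started at $s$, using the identity $\absorb{s}=\fund_{s,s}/\pi_s$ of \eqref{eq:walk-centr} together with the truncated-sum approximation of $\fund_{s,s}$. Since $\treekem$ is the difference of these two estimates, a triangle inequality over the two error terms, combined with a union bound over their two high-probability events, will yield the total additive error, which I then convert to a relative error.

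For the running time I would split the cost along the same two phases. For the tree-sampling phase, the key observation is that a single run of the Wilson-type loop-erased procedure performs exactly $\sum_{i\in V\setminus\setof{s}}\passnum{i}{j}$ random-walk steps, since each step increments the passage counter of the current node precisely once; by \lemref{lem:tree-sampling-unbiased} the expectation of this quantity is $\bigtrace{\subfund}$, while the subsequent loop-erasure pass costs only $\bigo{n}$ and is dominated. Multiplying by the sample size $r=\bigbigo{\epsilon^{-2}\tau^2\dmax^{2\tau}\log^3 n}$ gives the first term of $T$. For the walk-centrality phase I would multiply the number of truncated walks $r'=\bigbigo{\frac{l^2\log n}{\epsilon^2\pi_s^2 n^2}}$ by the walk length $l$, producing the second term.

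For the accuracy I would bound the two estimates as follows. The trace estimate is controlled directly by \lemref{lem:trace-approx}: with the chosen $r$, the event $\bigabs{\bigtrace{\subfund}-\sum_{i\in V\setminus\setof{s}}\meanpassnum{i}}\leq\nicefrac{n\epsilon}{2}$ holds with probability at least $1-\nicefrac{1}{2n}$. The walk-centrality estimate carries two sources of error: the truncation error $\abs{\fund_{s,s}-\truncfund_{s,s}}\leq\nicefrac{\epsilon}{2}$, which follows from \lemref{lem:diag-truncate} with the value of $l$ prescribed in \algoref{algo:treemc}, and the Monte Carlo error of $\meanretnum{s}$, which I would bound via \lemref{lem:hoeffding} (the empirical Bernstein test of \lemref{lem:bernstein} only tightens this in practice without affecting the worst-case guarantee). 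After rescaling both by $\pi_s^{-1}$, the choice of $r'$ is calibrated so that this contribution stays within $\nicefrac{n\epsilon}{2}$ with high probability. Adding the two bounds yields $\abs{\treekem-\kem}\leq n\epsilon$; invoking the lower bound $\kem\geq\nicefrac{(n+1)}{2}$ from~\cite{Hu06} finally converts this additive guarantee into the stated relative bound $\abs{\treekem-\kem}\leq 2\epsilon\kem$.

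The main obstacle I anticipate is the $\pi_s^{-1}$ scaling in the walk-centrality term. Because $s$ is chosen to maximize $\pi_s$ we only know $\pi_s\geq\nicefrac{1}{n}$, so the factor $\pi_s^{-1}$ can be as large as $n$; the truncation and sampling errors must therefore be driven down to order $\epsilon\pi_s$ \emph{before} rescaling, which is exactly what the unusually small thresholds $\nicefrac{\epsilon}{2}$ in the choice of $l$ and $\nicefrac{\sqrt{n}\epsilon}{2}$ in the Bernstein test are designed to achieve. Keeping this error budget consistent across the truncation step, the Monte Carlo step, and the $\pi_s^{-1}$ rescaling—so that both halves of the total error each remain at $\nicefrac{n\epsilon}{2}$—is the delicate bookkeeping, whereas the time-complexity accounting and the additive-to-relative conversion are routine.
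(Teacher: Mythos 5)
Your proposal is correct and follows essentially the same route as the paper: the error guarantee comes from plugging \lemref{lem:trace-approx} and \lemref{lem:diag-truncate} into the Hoeffding/union-bound framework of \thmref{thm:performance-improvedmc} and converting to a relative bound via the \(\kem\geq\nicefrac{(n+1)}{2}\) lower bound of~\cite{Hu06}, while the running time is \(r\cdot\bigtrace{\subfund}+r'\cdot l\), with the expected cost of one tree sample identified as \(\bigtrace{\subfund}\). Your counting argument for that last identification (each walk step increments exactly one passage counter, then apply \lemref{lem:tree-sampling-unbiased} and linearity of expectation) is the same idea as the paper's appeal to the node-ordering independence of Wilson's algorithm, just routed through the lemma; your explicit accounting of the \(\pi_s^{-1}\) rescaling is, if anything, more careful than the paper's own sketch.
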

\ifthenelse{\boolean{fullver}}{\begin{proof}
    The theoretical proof of the additive error guarantee for \algoname{TreeMC} is straightforward by plugging \lemref{lem:trace-approx} and \lemref{lem:diag-truncate} into the error analysis of truncated sum mentioned in \thmref{thm:performance-improvedmc}, and the additive error guarantee can be analogously converted to the relative error guarantee by~\cite{Hu06}. Therefore, our primary goal is to provide the time complexity of this algorithm, which essentially involves assessing the expected running time of sampling an incoming directed spanning tree.

    As shown in \algoref{algo:treemc}, the time of sampling an incoming directed spanning tree is determined by the total number of visits to nodes that are not yet included in the sampled incoming directed spanning tree. In the first iteration of the loop-erased random walk starting from node \(i\), the expected number of visits to \(i\) is equal to \(\subfund_{i,i}\).
    Recall that for Wilson's algorithm, the distribution of sampled path is independent of the node ordering~\cite{Wi96}. In other words, any node can be selected in the initial round while the distribution of sampled path remains unchanged.
    Therefore, the expected running time of sampling an incoming directed spanning tree is equal to \(\bigtrace{\subfund}\).

    According to \lemref{lem:trace-approx}, we can obtain the required sample size \(r\). Combining \lemref{lem:diag-truncate} with the error analysis of truncated sum in \thmref{thm:performance-improvedmc}, the necessary amount \(r'\) and length \(l\) of simulating truncated random walks can be determined. Finally we can represent the time complexity \(O(T)\) of \algoname{TreeMC} as
    \begin{align*}
        T & =r\cdot\bigtrace{\subfund}+r'\cdot l                                                                      \\
          & =\epsilon^{-2}\tau^2\dmax^{2\tau}\log^3n\cdot\bigtrace{\subfund}+\frac{l^3\log n}{2\epsilon^2\pi_s^2n^2},
    \end{align*}
    finishing the proof.
\end{proof}
}

As shown in \thmref{thm:performance-treemc}, the expected time for sampling an incoming directed rooted spanning tree scales as \(\bigtrace{\subfund}\), equivalent to the sum of diagonal entries in \(\subfund\). In contrast, the expected cost of a single naive absorbing walk simulation entails summing all entries of the dense matrix \(\subfund\). Therefore, the efficiency gains of \algoname{TreeMC} are significant relative to naive simulation of absorbing random walks.

Meanwhile, \thmref{thm:performance-treemc} also emphasizes the importance of selecting absorbing node \(s\) in \algoname{TreeMC}. First, the expected running time of \algoname{TreeMC} depends on the mean hitting time from nodes in \(\gr\) to \(s\). Enhanced reachability of \(s\) leads to improved efficiency of \algoname{TreeMC}. Additionally, the accuracy of estimating \(\absorb{s}\) is related to the scaling coefficient \(\pi_s^{-1}\). If \(\pi_s^{-1}\) is smaller, then the theoretical accuracy of \algoname{TreeMC} can be reduced. Therefore, we choose \(s\) as the node with maximum \(\pi_s\), which is both reasonable and efficient for implementation.

\section{Numerical Experiments}\label{sec:experim}

\ifthenelse{\boolean{fullver}}{
      In this section, we present experimental results for various real-world networks to demonstrate the efficiency and accuracy of our algorithms for estimating Kemeny's constant of digraphs. First, we outline the experimental settings, including information of experimental environment, as well as the selection of datasets, baselines, and parameters. Then, we provide the numerical results on real-world networks, which validate the superiority of our algorithms.
}

\subsection{Experimental Settings}\label{subsec:exp-sett}

\textbf{Datasets.}
The data of real-world digraphs utilized in our experiments are sourced from the Koblenz Network Collection~\cite{Ku13} and SNAP~\cite{LeKr14}. To facilitate fair comparison against previous works, we also conduct experiments on several undirected networks considered in~\cite{XuShZhKaZh20,LiHuLe21}.
For those networks that are not originally strongly connected, we perform our experiments on their largest strongly connected components (LSCCs). Relevant information about the LSCC of studied real-world networks is shown in \tabref{tab:real-time}, where networks are listed in ascending order by the number of nodes. The smallest network has 8490 nodes, while the largest one consists of more than 8 million nodes.

\begin{table*}
      \centering
      \fontsize{8.0}{8.8}\selectfont
      \def\app{\algoname{ApprKem}}
      \def\dyn{\algoname{DynMC}}
      \def\ada{\algoname{AblatMC}}
      \def\refi{\algoname{ImprovedMC}}
      \def\acc{\algoname{TreeMC}}
      \def\Undi{\multirow{5}{*}{\rotatebox[origin=c]{90}{Undirected}}}
      \def\Dir{\multirow{15}{*}{\rotatebox[origin=c]{90}{Directed}}}
      %     \begin{threeparttable}
      \caption{The running time (seconds, \(s\)) of \algoname{ApproxKemeny} (\app), \algoname{DynamicMC} (\dyn), \algoname{AblationMC} (\ada), {\refi} and {\acc} with various \(\epsilon\) on realistic networks.}
      \label{tab:real-time}
      \begin{tabular}{@{}ccrrccccccccc@{}}
            \toprule
            \multirow{3}{*}{Type}
                  &
            \multirow{3}{*}{Network}
                  &
            \multirow{3}{*}{Node}
                  &
            \multirow{3}{*}{Edge}
                  &
            \multicolumn{9}{c}{Running time (s)}                                                                                                                                                          \\
            \cmidrule(l){5-13}

                  &                 &           &             &
            \multirow{2}{*}{\app}
                  &
            \multirow{2}{*}{\dyn}
                  &
            \multirow{2}{*}{\ada}
                  &
            \multicolumn{3}{c}{\refi}
                  &
            \multicolumn{3}{c}{\acc}                                                                                                                                                                      \\
            \cmidrule(l){8-13}

                  &                 &           &             &       &       &       & \(\epsilon=0.3\) & \(\epsilon=0.2\) & \(\epsilon=0.15\) & \(\epsilon=0.3\) & \(\epsilon=0.2\) & \(\epsilon=0.15\) \\
            \midrule
            \Undi & Sister cities   & 10,320    & 17,988      & 0.714 & 1.246 & 0.883 & 0.252            & 0.586            & 1.104             & 0.011            & 0.027            & 0.036             \\
                  & PGP             & 10,680    & 24,316      & 0.767 & 1.230 & 0.792 & 0.222            & 0.508            & 0.927             & 0.008            & 0.016            & 0.027             \\
                  & CAIDA           & 26,475    & 53,381      & 1.274 & 1.507 & 1.408 & 0.268            & 0.610            & 1.107             & 0.013            & 0.025            & 0.042             \\
                  & Skitter         & 1,694,616 & 11,094,209  & --    & 437.0 & 92.78 & 0.973            & 1.534            & 2.102             & 0.829            & 1.795            & 3.000             \\
                  & Orkut           & 3,072,441 & 117,184,899 & --    & 446.3 & 223.2 & 1.854            & 2.873            & 3.930             & 3.022            & 5.849            & 10.03             \\
                  & soc-LiveJournal & 5,189,808 & 48,687,945  & --    & 2843  & 363.2 & 2.351            & 3.610            & 4.921             & 4.802            & 9.920            & 16.82             \\
            \midrule
            \Dir  & Gnutella30      & 8,490     & 31,706      & --    & 1.138 & 0.980 & 0.292            & 0.714            & 1.272             & 0.007            & 0.014            & 0.028             \\
                  & Wikilink-wa     & 22,528    & 135,661     & --    & 1.388 & 1.097 & 0.238            & 0.509            & 1.005             & 0.015            & 0.028            & 0.052             \\
                  & Epinions        & 32,223    & 443,506     & --    & 2.760 & 1.545 & 0.305            & 0.611            & 1.256             & 0.022            & 0.046            & 0.075             \\
                  & EU Inst         & 34,203    & 151,132     & --    & 2.636 & 1.486 & 0.229            & 0.594            & 1.087             & 0.025            & 0.041            & 0.072             \\
                  & Wikilink-la     & 158,427   & 3,486,928   & --    & 4.724 & 2.407 & 0.264            & 0.352            & 0.961             & 0.109            & 0.227            & 0.436             \\
                  & Higgs           & 360,210   & 14,102,583  & --    & 14.33 & 8.100 & 0.479            & 0.669            & 0.894             & 0.315            & 0.650            & 1.076             \\
                  & Youtube         & 509,245   & 4,269,142   & --    & 27.41 & 16.10 & 0.583            & 0.925            & 1.236             & 0.306            & 0.686            & 1.183             \\
                  & Pokec           & 1,304,537 & 29,183,655  & --    & 128.4 & 62.22 & 0.795            & 1.229            & 1.635             & 0.839            & 1.836            & 3.153             \\
                  & Stack Overflow  & 1,642,130 & 32,759,694  & --    & 152.5 & 79.31 & 0.870            & 1.379            & 1.875             & 0.882            & 1.936            & 3.350             \\
                  & Wikilink-fr     & 2,311,584 & 113,754,248 & --    & 187.7 & 99.59 & 0.939            & 1.487            & 2.007             & 1.739            & 3.416            & 5.465             \\
                  & DBpedia         & 3,796,073 & 97,783,747  & --    & 344.9 & 164.7 & 1.299            & 2.053            & 2.809             & 2.925            & 5.744            & 9.781             \\
                  & Wikilink-en     & 8,026,662 & 416,705,115 & --    & --    & 445.4 & 2.778            & 3.952            & 5.360             & 9.386            & 18.99            & 33.23             \\
            \bottomrule
      \end{tabular}
      %     \end{threeparttable}
      \undef{\app}
      \undef{\dyn}
      \undef{\ada}
      \undef{\refi}
      \undef{\acc}
      \undef{\Undi}
      \undef{\Dir}
\end{table*}

\textbf{Environment.}
All experiments are conducted on a Linux server with 32-core 2.5GHz CPU and 256GB of RAM. We implement all the algorithms in Julia for a fair comparison. For \algoname{ApproxKemeny}, we leverage the Laplacian Solver from~\cite{KySa16}. Since all the algorithms are pleasingly parallelizable, we force the program to run on 32 threads in every experiment.
% The source code of this paper is available at \url{https://anonymous.4open.science/r/Directed-Kemeny-2788}.

\textbf{Baselines and Parameters.}
To showcase the superiority of our proposed algorithms, we implement several existing methods for comparison. First, we implement the dynamic version of the state-of-the-art method \algoname{DynamicMC} presented in~\cite{LiHuLe21} as a baseline.  Moreover, we implement the algorithm \algoname{ApproxKemeny} in~\cite{XuShZhKaZh20}, which is on the basis of the Laplacian solver~\cite{KySa16}. Meanwhile, as our proposed algorithm \algoname{ImprovedMC} incorporates two optimization techniques, we need to ensure that both of these techniques meet our expectations. For this purpose, we implement an ablation method called \algoname{AblationMC}, which solely utilizes the adaptive sampling technique while excluding the subset sampling technique.

For \algoname{DynamicMC}, we set the threshold parameter \(\epsilon_d=0.0005n\), which is the same as~\cite{LiHuLe21}. For \algoname{ApproxKemeny}, the error parameter \(\epsilon\) is set to be \(0.2\), which is shown in~\cite{XuShZhKaZh20} to achieve good performance. For \algoname{AblationMC}, the error parameter \(\epsilon\) is also set to be \(0.2\). It is worth noting that \algoname{ImprovedMC} and \algoname{TreeMC} are dependent on the second largest modulus of eigenvalues \(\lambda\) of \(\matp\), and \algoname{TreeMC} additionally depends on the stationary distribution \(\vecpi\),  the left eigenvector associated with eigenvalue 1 for \(\matp\). Therefore, we use the Implicitly Restarted Arnoldi Method~\cite{LeSoYa98} to calculate \(\lambda\) and \(\vecpi\) in advance for each tested network. Our results demonstrate that even for the largest real-world dataset tested, the pre-computation time is much less than one minute, which is negligible compared with the time for calculating Kemeny's constant. Therefore, we do not take the pre-computation time into account in our experiments.

\subsection{Results on Real-world Networks}\label{subsec:res-real}

\subsubsection{Efficiency}\label{subsubsec:res-real-effic}
We first evaluate the efficiency of our proposed algorithms on real-world networks. The execution time of our proposed algorithms and baselines is reported in \tabref{tab:real-time}. Specifically, we present the results of \algoname{ImprovedMC} and \algoname{TreeMC} for \(\epsilon\in\setof{0.3,0.2,0.15}\).
Note that due to the limit of memory space, \algoname{ApproxKemeny} is infeasible for the large undirected graphs, while \algoname{DynamicMC} is infeasible for the largest digraph due to the running time being more than one hour.

\tabref{tab:real-time} indicates that for every real-world network, the running time of \algoname{ImprovedMC} and \algoname{TreeMC} with \(\epsilon=0.2\) is smaller than that of baselines.
\begin{comment}
\algoname{TreeMC} exhibits exceptional speed on small networks like CAIDA and Epinions, while \algoname{ImprovedMC} achieves significant speed-up on large networks due to its sublinear time complexity.
\end{comment}
Recall that the theoretical running time of \algoname{ImprovedMC} and \algoname{TreeMC} is proportional to \(\epsilon^{-3}\) and \(\epsilon^{-2}\), respectively. As shown in \tabref{tab:real-time}, the larger constant factor \(\epsilon^{-3}\) leads to the phenomenon that \algoname{ImprovedMC} is slower than \algoname{TreeMC} on relatively small networks. However, for large-scale networks like Wikilink-en, the sublinear time complexity of \algoname{ImprovedMC} becomes dominant, leading to evident speedup compared to \algoname{TreeMC}.

Additionally, we observe that the running time of \algoname{DynamicMC} in our experiments is longer than that reported in~\cite{LiHuLe21}. As mentioned in \secref{subsec:exist-methods}, the high efficiency of \algoname{DynamicMC} in~\cite{LiHuLe21} is largely attributed to GPU-based implementation. In our experiments, we implement all baselines and our proposed approaches by using 32 CPU threads, which also ensures a fair comparison.

The results in \tabref{tab:real-time} also reveal that the ablation method \algoname{AblationMC} outperforms \algoname{DynamicMC} in almost every tested network. This advantage indicates that the optimization by adaptive sampling technique is effective. Meanwhile, the speed-up of \algoname{ImprovedMC} compared to \algoname{AblationMC} is also remarkable, which validates the high efficiency of the subset sampling technique.

\subsubsection{Accuracy}\label{subsubsec:res-real-acc}

We next evaluate the accuracy of our proposed algorithms. According to~\eqref{eq:kem-exact}, we can compute the exact value of Kemeny's constant for small real-world networks. The mean relative error of approximation algorithms compared with exact values is reported in \tabref{tab:realistic-error}.

\begin{table}
      \centering
      % \fontsize{8.0}{8.8}\selectfont
      \tabcolsep=3pt
      \fontsize{6.5}{7.2}\selectfont
      \def\app{\algoname{ApprKem}}
      \def\dyn{\algoname{DynMC}}
      \def\ada{\algoname{AblatMC}}
      \def\refi{\algoname{ImprovedMC}}
      \def\acc{\algoname{TreeMC}}
      \def\myerr{Mean relative error (\(\times{10}^{-3}\))}
      % \begin{threeparttable}
      \caption{The mean relative error (\(\times{10}^{-3}\)) of \algoname{ApproxKemeny} (\app), \algoname{DynamicMC} (\dyn), \algoname{AblationMC} (\ada), {\refi} and {\acc}.}
      \label{tab:realistic-error}
      \begin{tabular}{@{}cccccccccc@{}}
            \toprule
            \multirow{3}{*}{Network}  &
            \multicolumn{9}{c}{\myerr}                                                                                      \\
            \cmidrule(l){2-10}        &
            \multirow{2}{*}{\app}     &
            \multirow{2}{*}{\dyn}     &
            \multirow{2}{*}{\ada}     &
            \multicolumn{3}{c}{\refi} &
            \multicolumn{3}{c}{\acc}                                                                                        \\
            \cmidrule(l){5-10}
                                      &       &       &       & \(0.3\) & \(0.2\) & \(0.15\) & \(0.3\) & \(0.2\) & \(0.15\) \\
            \midrule
            Sister cities             & 0.276 & 2.550 & 1.479 & 2.768   & 1.504   & 0.812    & 0.839   & 0.265   & 0.225    \\
            PGP                       & 0.904 & 3.175 & 2.387 & 2.268   & 0.781   & 0.454    & 0.523   & 0.183   & 0.170    \\
            CAIDA                     & 0.103 & 1.996 & 0.367 & 3.231   & 0.912   & 0.772    & 0.157   & 0.075   & 0.054    \\
            \midrule
            Gnutella30                & --    & 0.685 & 0.266 & 0.506   & 0.252   & 0.155    & 0.453   & 0.179   & 0.081    \\
            Wikilink-wa               & --    & 0.427 & 0.094 & 1.703   & 0.167   & 0.112    & 0.239   & 0.049   & 0.027    \\
            Epinions                  & --    & 4.607 & 1.622 & 2.652   & 1.926   & 0.829    & 0.508   & 0.177   & 0.140    \\
            EU Inst                   & --    & 9.261 & 0.295 & 2.364   & 1.739   & 0.632    & 0.279   & 0.215   & 0.165    \\
            \bottomrule
      \end{tabular}
      % \end{threeparttable}
      \undef{\app}
      \undef{\dyn}
      \undef{\ada}
      \undef{\myerr}
      \undef{\refi}
      \undef{\acc}
\end{table}

\tabref{tab:realistic-error} indicates that for all the evaluated algorithms that exhibit a theoretical error guarantee, their estimated relative error is significantly lower than guaranteed value, including \algoname{ImprovedMC} and \algoname{TreeMC}. For the case of \algoname{ImprovedMC} with \(\epsilon=0.2\), its maximum approximation error is less than 0.2\%, which is considerably small. Furthermore, it is evident that \algoname{TreeMC} with \(\epsilon=0.15\) consistently provides the most precise answer, which can be largely attributed to the alternative formula derived in \thmref{thm:kem-submat}.

Finally, the results in \tabref{tab:realistic-error} also indicate that the mean relative error of ablation method \algoname{AblationMC} is always lower than that of \algoname{DynamicMC}.
This discrepancy in empirical accuracy arises from the different selections of algorithm parameters. For \algoname{DynamicMC}, the simulation amount \(r\) is fixed and the truncated length \(l\) is dynamically determined. The error introduced by dynamically determining \(l\) is biased, and this bias cannot be compensated by a fixed large value of \(r\). In contrast, \algoname{AblationMC} fixes \(l\) and dynamically determines \(r\) based on \lemref{lem:bernstein}, whose incurred error is unbiased.
Recall that \algoname{AblationMC} solely incorporates the adaptive sampling technique, we can confirm that this technique makes significant improvement of efficiency without sacrificing theoretical accuracy. In summary, our proposed algorithms exhibit comparable accuracy with remarkable speed-up, and the algorithm \algoname{TreeMC} is even more accurate than other competitors.

\subsection{Influence of Varying Error Parameter}

In evaluating the efficiency and accuracy of our algorithms, we observe that the error parameter \(\epsilon\) markedly impacts the performance. We now examine in detail how  \(\epsilon\)  affects the efficiency and accuracy. Specifically, we range \(\epsilon\) from 0.15 to 0.4 and provide the running time and mean relative error of each algorithm on several real-world networks. Notably, for \algoname{DynamicMC} the parameter \(\epsilon_d\) relates to threshold rather than error guarantee. We thus fix \(\epsilon_d=0.0005n\) and present the performance of \algoname{DynamicMC} as a baseline.

\subsubsection{Effect on efficiency}

We first assess the impact of varying error parameter on the efficiency of different algorithms. The results on real-world networks are presented in \figref{fig:real-time}.

\begin{figure}[htbp]
      \centering
      \includegraphics[width=\linewidth]{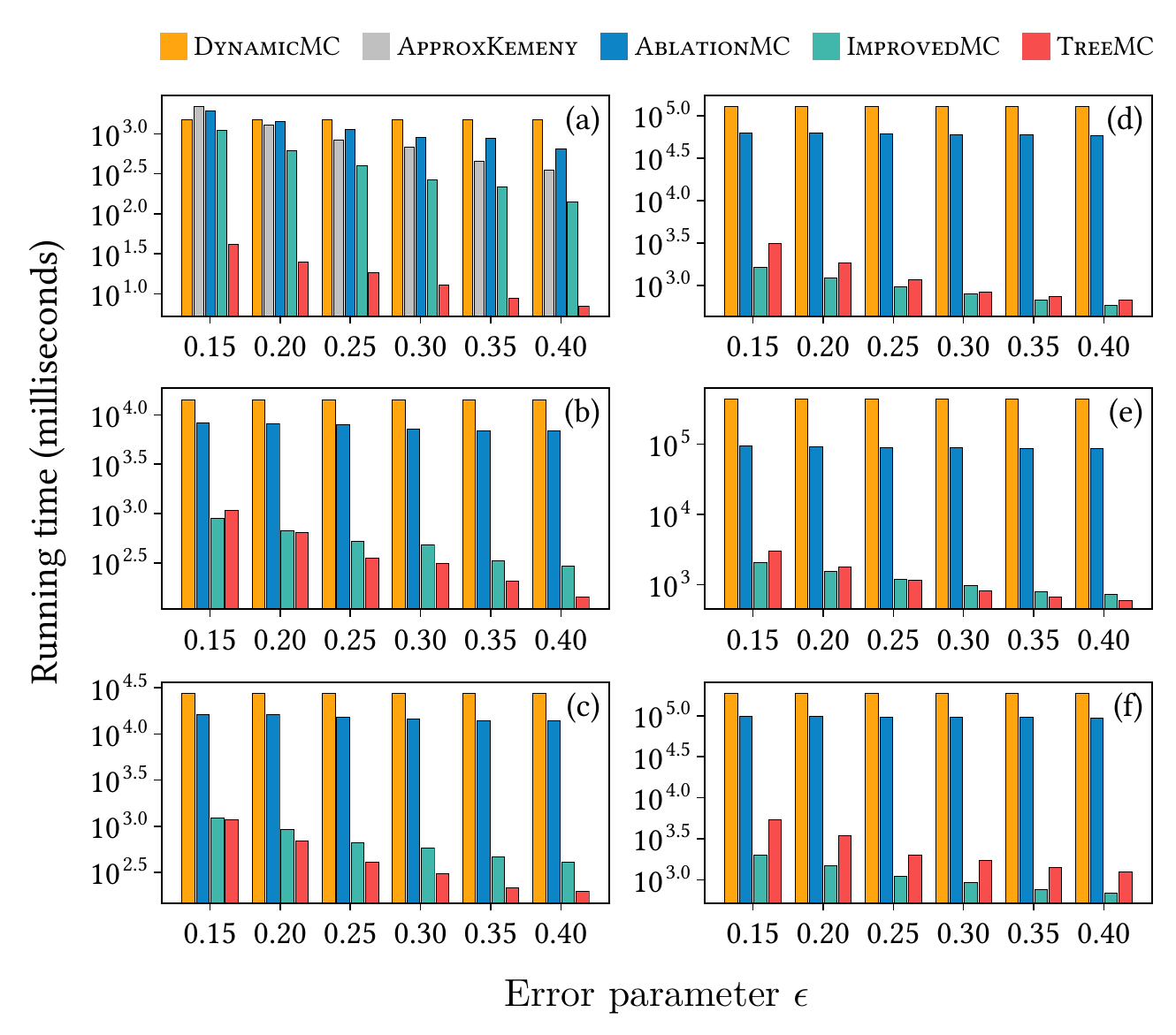}
      \caption{Running time of different approximate algorithms with varying error parameter \(\epsilon\) on real-world networks: CAIDA (a), Higgs (b), Youtube (c), Pokec (d), Skitter (e) and Wikilink-fr (f).}
      \Description{Running time of different approximate algorithms with varying error parameter on real-world networks.}
      \label{fig:real-time}
\end{figure}

As shown in \figref{fig:real-time}, the running time of \algoname{ApproxKemeny} and \algoname{TreeMC} demonstrates comparable growth trends, which is more apparent than competitors. This aligned scaling corroborates the asymptotic complexities proportional to \(\epsilon^{-2}\). For \algoname{AblationMC} and \algoname{ImprovedMC}, their theoretical complexity is proportional to \(\epsilon^{-2}\) and \(\epsilon^{-3}\), respectively. However, such growth patterns are less evident on large networks like Pokec and Skitter. This situation should be attributed to the leverage of adaptive sampling technique, which may terminate the unnecessary simulation in advance. This result further validates the effectiveness of adaptive sampling technique.

\subsubsection{Effect on accuracy}

\begin{figure}[htbp]
      \centering
      \includegraphics[width=\linewidth]{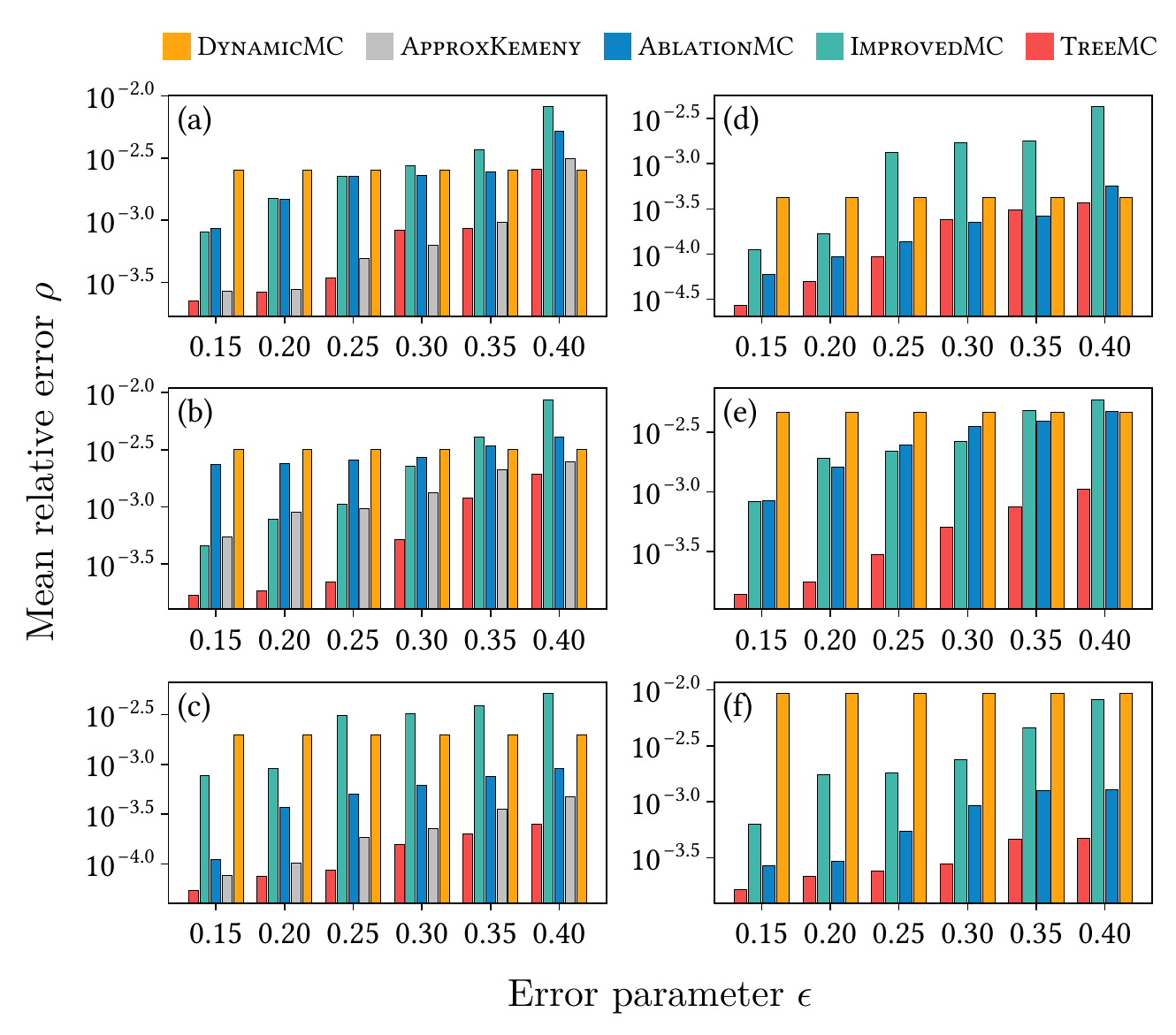}
      \caption{Mean relative error of different approximate algorithms with varying error parameter \(\epsilon\) on real-world networks: Sister cities (a), PGP (b), CAIDA (c), Wikilink-wa (d), Epinions (e) and EU Inst (f).}
      \Description{Mean relative error of different approximate algorithms with varying error parameter on real-world networks.}
      \label{fig:real-error}
\end{figure}

We next analyze the impact of varying \(\epsilon\) on the accuracy across algorithms. The results are presented in \figref{fig:real-error}. As shown in \figref{fig:real-error}, \algoname{TreeMC} consistently yields the highest or the second highest accuracy in estimating Kemeny's constant. Although the error of \algoname{ImprovedMC} with large \(\epsilon\) is not ideal, reducing \(\epsilon\) to 0.2 or 0.15 significantly decreases its error to levels comparable with other methods. Notably, the estimation errors of our proposed algorithms are sensitive as \(\epsilon\) changes, confirming that their accuracy is effectively governed by the error parameter.

\section{Related Work}

\textbf{Other Laplacian solver-based methods.}
In addition to existing methods mentioned in \secref{subsec:exist-methods}, Zhang~\textit{et al.}~\cite{ZhXuZh20} introduced another approximation algorithm \algoname{ApproxHK}, which also leverages the nearly linear-time Laplacian solver~\cite{KySa16}. Unlike \algoname{ApproxKemeny}~\cite{XuShZhKaZh20} that uses Hutchinson's method to avoid direct computation of the pseudoinverse, \algoname{ApproxHK} leverages the Johnson-Lindenstrauss (JL) lemma. However, the total memory requirements of JL lemma and Laplacian solver limit \algoname{ApproxHK}'s scalability compared to \algoname{ApproxKemeny} on large graphs. In contrast, our proposed algorithms mainly use memory to store the network, enabling improved scalability. Meanwhile, the usage of Laplacian solver also restricts \algoname{ApproxHK} to undirected graphs, while our algorithms support both directed and undirected graphs.

% \vspace{4pt}
\textbf{Other spanning tree-based methods.}
To approximate matrix inverses, researchers have proposed other spanning tree-based approaches. Angriman \textit{et al.}~\cite{AnPrGrMe20} introduced an algorithm for estimating the diagonal elements of the Laplacian pseudoinverse by sampling spanning trees. Specifically, they infused current flows into the sampled spanning trees and used the average value of current flows to estimate resistance distance for undirected graphs, which is a key procedure of their algorithm. However, the definition of resistance distance is limited to undirected graphs, and the theoretical foundation of current flows cannot be directly applied to digraphs. In contrast, our proposed algorithm \algoname{TreeMC} samples rooted spanning trees, which is a distinct approach from that of~\cite{AnPrGrMe20}.

% \vspace{4pt}
\textbf{Discussion of directed Laplacian solver.}
As discussed in \secref{subsec:exist-methods}, \algoname{ApproxKemeny}~\cite{XuShZhKaZh20} cannot be extended to digraphs due to restrictions of the nearly linear-time Laplacian solver~\cite{KySa16}. Though prior works have proposed nearly linear-time algorithms for solving directed Laplacian systems with errror guarantee~\cite{CoKeKyPePeRaSi18,KyMePr22}, their theoretical efficiency has not been translated to practical implementation. Therefore, it remains infeasible to directly apply these directed Laplacian solvers for efficient estimation of Kemeny's constant on digraphs.

% \vspace{4pt}
\textbf{Computation of personalized PageRank.}
Apart from the Kemeny constant, other random walk-based quantities have also garnered substantial research attention, such as personalized PageRank (PPR)~\cite{HoGuZhWaWe23,HoChWaWe21,WaWeGaYuDuWe22,ChGuZhYaSk23,WuGaWeZh21}. Since the PPR vector satisfies a recursive relationship, several recent studies~\cite{WaWeGaYuDuWe22,ChGuZhYaSk23} utilized a variety of push-based deterministic approaches for its computation, while others~\cite{HoGuZhWaWe23,WuGaWeZh21} combined push-based approaches with Monte Carlo methods.
Given that the PPR stems from the \(\alpha\)-decaying random walk, where the walker may stop at each visited node with probability \(\alpha\), the expected running time of push-based methods scales in proportion to \(\alpha^{-1}\). In contrast, simple random walks related to Kemeny's constant can be extremely long. Thus, directly adapting existing push-based algorithms is impractical for efficient Kemeny's constant approximation.

\section{Conclusion}

We presented two different Monte Carlo algorithms \algoname{ImprovedMC} and \algoname{TreeMC} for approximating Kemeny's constant effectively and efficiently. \algoname{ImprovedMC} is based on the simulation of truncated random walks, which utilizes an adaptive sampling technique as well as a technique that allows the simulation from a subset of nodes. Due to these optimization techniques, \algoname{ImprovedMC} exhibits sublinear time complexity while retaining theoretical accuracy.
In order to further improve the accuracy of estimating Kemeny's constant, we proposed \algoname{TreeMC} with the help of an alternative formula in terms of the inverse of a submatrix associated with the transition matrix. Extensive numerical results indicate that \algoname{ImprovedMC} is extremely faster than the state-of-the-art method with comparable accuracy, and that \algoname{TreeMC} outperforms the state-of-the-art method in terms of both efficiency and accuracy, but is slightly slower than \algoname{ImprovedMC}.

\section*{Acknowledgements}

This work was supported by the National Natural Science Foundation of China (Nos. 62372112, U20B2051, and 61872093).

\bibliographystyle{ACM-Reference-Format}
\balance
% \bibliography{
%     refs/misc.bib,
%     refs/refs_manc.bib,
%     refs/related_work.bib,
%     refs/appl.bib
% }
%%% -*-BibTeX-*-
%%% Do NOT edit. File created by BibTeX with style
%%% ACM-Reference-Format-Journals [18-Jan-2012].

\appendix

\ifthenelse{\not\boolean{fullver}}{
      \section{Proofs of Lemmas and Theorems}

      % \subsection{Proof of \lemref{lem:infin-sum}}
      % \input{proofs/infin-sum.tex}

      % \subsection{Proof of \lemref{lem:trace-truncate}}
      % \input{proofs/trace-truncate.tex}

      \subsection{Proof of \lemref{lem:diag-truncate}}

      % \subsection{Proof of \thmref{thm:kem-submat}}
      % \input{proofs/kem-submat.tex}

      \subsection{Proof of \lemref{lem:sqrt-sampling-unbiased}}

      \subsection{Proof of \lemref{lem:sqrt-sampling-error}}

      \subsection{Proof of \thmref{thm:performance-improvedmc}}

      \subsection{Proof of \lemref{lem:tree-sampling-unbiased}}

      \subsection{Proof of \lemref{lem:passnum-upperbound}}

      \subsection{Proof of \lemref{lem:trace-approx}}

      \subsection{Proof of \thmref{thm:performance-treemc}}
      
}

\end{document}